\newtheorem{theorem}{Theorem}[section]
\newtheorem{definition}[theorem]{Definition}
\newtheorem{lemma}[theorem]{Lemma}
\newtheorem{remark}[theorem]{Remark}
\newtheorem{proposition}[theorem]{Proposition}
\newcommand{\Rmnum}[1]{\expandafter\@slowromancap\romannumeral #1@}
\begin{document}

\title{Reconstruction with prior support information and non-Gaussian constraints}

\author{
\IEEEauthorblockN{
Xiaotong~Liu$^1$,
~Yiyu~Liang$^{2*}$}

\IEEEauthorblockA{
$^{1,2}$School of Mathematics and Statistics,
Beijing Jiaotong University, Beijing 100044,
PEOPLE’S REPUBLIC OF CHINA}
\thanks{Yiyu Liang is supported by the National Natural Science Foundation of China (Grant Nos. 12271021).}}




\maketitle
\begin{abstract}

In this study, we introduce a novel model, termed the Weighted Basis Pursuit Dequantization ($\omega
$-BPDQ$_p$), which incorporates prior support information by assigning weights on the $\ell_1$ norm in the $\ell_1$ minimization process and replaces the $\ell_2$ norm with the $\ell_p$ norm in the constraint. This adjustment addresses cases where noise deviates from a Gaussian distribution, such as quantized errors, which are common in practice. We demonstrate that Restricted Isometry Property (RIP$_{p,q}$) and Weighted Robust Null Space Property ($\omega$-RNSP$_{p,q}$) ensure stable and robust reconstruction within $\omega
$-BPDQ$_p$, with the added observation that standard Gaussian random matrices satisfy these properties with high probability. Moreover, we establish a relationship between RIP$_{p,q}$ and $\omega$-RNSP$_{p,q}$ that RIP$_{p,q}$ implies $\omega$-RNSP$_{p,q}$. Additionally, numerical experiments confirm that the incorporation of weights and the non-Gaussian constraint results in improved reconstruction quality.
\end{abstract}

\begin{IEEEkeywords}
Compressive sensing ,  Restricted isometry property , Null space property , Weighted sparse , Quantized error
\end{IEEEkeywords}

\section{Introduction}
The fundamental aim of compressive sensing lies in the reconstruction of an unknown signal from the undetermined linear system\cite{donoho2006}\cite{candesorigin}\cite{candesorigin2}. In noisy case, the measurements can be represented as $y=\Phi x+n$, where $x\in\mathbb{R}^N$ denotes the eager unknown signal, $\Phi\in\mathbb{R}^{m\times N}$ is the sensing matrix, $n\in\mathbb{R}^m$ represents the noise and $m<N$.

In practice, prior support information may be available for the signal. 
For instance, in time-series signal processing, the support can be estimated using information from previous time instants\cite{vaswani2010}\cite{friedlander2011}\cite{peng2014}\cite{yu2013}.
And it may be possible to estimate the support of the signal or its largest coefficients\cite{CHEN2019}\cite{ge2021weighted}.
Let $\widetilde{T}\subset[N]:=\{1,2,\ldots,N\}$ denote the support with prior information, and the weight $\omega=[\omega_1,\omega_2,\ldots,\omega_N]\in[0,1]^N$ is introduced:
\begin{equation}
\label{w}
    \omega_i=
\begin{cases}
    \theta \text{ if }i\in\widetilde{T}\\
    1 \text{ if }i\notin\widetilde{T}
\end{cases},
\end{equation}
where constant $\theta\in(0,1)$\cite{zhou2013}\cite{ge2021}.

In previous studies of the $\ell_1$ minimization, if the measurements have a noise level, denoted $n$, it is often assumed that the noise is Gaussian white noise, i.e., noise that follows a Gaussian distribution and is uncorrelated, e.g., \cite{candes2006IEEE}.
At this point, due to the characteristics of the normal distribution, the constraint on the noise is in the form of $\ell_2$ norm, i.e., $\|n\|_2\leqslant\epsilon,\,\epsilon>0$. However, in practical applications, the distribution of measurement value errors is not necessarily normal.

In this work, we consider a particular  model of sensing. A specific quantized model will be introduced. The measurement error after quantization generally follows a uniform distribution instead of a Gaussian distribution.

Quantization is crucial for converting analog signals to digital format. It is mapping input values from a large set to output the nearest approximating values in a smaller set\cite{Gersho1991}.
One of the most common uniform quantizers is the mid-riser quantizer\cite{Gersho1991}\cite{jac2011}\cite{panvan2017}, which will be discussed in the remainder of this paper:

\begin{equation}
\label{quantized}
    y_{\mathrm{q}}=Q_\alpha[\Phi x]=\Phi x+n,
\end{equation}
where $y_{\mathrm{q}} \in\left(\alpha \mathbb{Z}+\frac{\alpha}{2}\right)^m$ is the quantized measurement vector, $\left(Q_\alpha[\cdot]\right)_i=\alpha\left\lfloor(\cdot)_i / \alpha\right\rfloor+\frac{\alpha}{2}$ is the uniform mid-riser quantizer in $\mathbb{R}^m$ of bin width $\alpha$, and $n := Q_\alpha[\Phi x]-\Phi x$ is the quantization distortion.

To a first approximation, the quantization error for each measurement behaves like a uniformly distributed random variable on $[-\frac{\alpha}{2},\frac{\alpha}{2}]$\cite{panvan2017}.

To obtain a more accurate estimate of the error, the form of the noise constraint should be changed.
In \cite{jac2011}, a model BPDQ$_p$ that changes the norm of the noise constraint is proposed:
$$
    \min_{z\in\mathbb{C}^n}\|z\|_{1}\mbox{ subject to }\|\Phi z-y\|_{p}\le\epsilon
$$

The difference from the original $\ell_1$ minimization problem is the replacement of the noise constraint with the $\ell_p$ norm.
In conjunction with the previous weighted form of the $\ell_1$ minimization problem, the above BPDQ$_p$ model can be extended to construct the $\omega$-BPDQ$_p$ model. 
In the problem of reconstructing a sparse vector $x$ by known measurements $y=\Phi x+n$ with $\ell_p$ norm noise constraints, i.e., $\|n\|_p\leqslant\epsilon,\,\epsilon>0$. The $\omega$-BPDQ$_p$ model can be expressed as:
\begin{equation}
\label{wbpdqp}
    \min_{x\in\mathbb{C}^n}\|x\|_{\omega,1}\mbox{ subject to }\|\Phi x-y\|_{p}\le\epsilon.
\end{equation}
where $\|x\|_{\omega,1}:=\sum_i|x_i|\omega_i$.

The restricted isometry property and the null space property for this model have extended versions that ensure the reconstruction, denoted as RIP$_{p,q}$(see in Definition \ref{wrippq}) and $\omega$-RNSP$_{p,q}$( see in Definition \ref{robustnsp}) respectively. In Section \ref{sec:wrip}, after introducing the definitions of the two properties, we respectively show that they are capable of ensuring stable and robust reconstruction through the $\omega$-BPDQ$_p$ model, and can get bounds for reconstruction errors that are nearly of the same order of magnitude(see in Theorem \ref{thmripp2} and \ref{thmrnsp}).

To ensure the theory's feasibility, it is essential to ascertain the existence of matrices that satisfy RIP$_{p,q}$ and $\omega$-RNSP$_{p,q}$ properties. Given a sufficiently large number of measurements $m$, standard Gaussian random matrices are expected to satisfy RIP$_{p,q}$ and $\omega$-RNSP$_{p,q}$ properties with high probability(see in Proposition \ref{nspgaussian}).  To prove Proposition \ref{nspgaussian}, a weighted convex hull is defined. However, there is insufficient evidence to demonstrate that the weighted convex ball is equivalent to the unit ball of a certain norm, as has been proven in the unweighted case \cite{dirk2018}.
However, the minimum number of measurements required for RIP$_{p,q}$ is typically larger than that for $\omega$-RNSP$_{p,q}$, suggesting a possible inclusion relationship between them. 
Indeed, RIP$_{p,q}$ of order $2s$ can imply $\omega$-RNSP$_{p,q}$ of order $s$(see in Proposition \ref{rip2nsp}).

In Section \ref{sec:exp}, to provide experimental validation of the theoretical framework, we introduce the proximal operator to solve the convex optimization. We utilized the Douglas–Rachford algorithm as the iterative method for solving the weighted $\ell_1$ minimization problem arising from $\omega$-BPDQ$_p$. In our experimental analysis, we demonstrate the indispensability of weights and $\ell_p$ constraints by employing random vectors.

Let us introduce some notations that will be mentioned in the following paper. The usual $\ell_\infty$ norm in $\mathbb{C}^N$ is denoted by $\|x\|_\infty=\max_{j=1,\ldots,N}|x_j|$. And $\|x\|_0:=\#\{j:x_j\ne 0\}$. The null space of matrix $A$ is denoted by $\ker(A)$. The index set of support of vector $x\in\mathbb{C}^N$ is denoted by $[N]=\{1,2,\ldots,N\}$. The complement set of set $S$ is $S^c$. And $x_S$ is the vector has the same entries as $x$ on $S$, and 0 on $S^c$.
\section{$\omega$-BPDQ$_p$ and Its Reconstruction Guarantee}
\label{sec:wrip}

In this section, the definitions of RIP$_{p,q}$ and $\omega$-RNSP$_{p,q}$ will be introduced. Subsequently, their capability to provide stable and robust error bounds for $\omega$-BPDQ$_p$ will be demonstrated. Specifically, the discussion will cover how standard Gaussian random matrices, when the number of observations is sufficiently large, can satisfy RIP$_{p,q}$ and $\omega$-RNSP$_{p,q}$ with high probability. Additionally, the relationship between RIP$_{p,q}$ and $\omega$-RNSP$_{p,q}$ will be explored.

\subsection{RIP$_{p,q}$}
To provide theoretical support for the reconstruction of this model, the restricted isometry property(RIP$_{p,q}$) for the $\ell_p$ norm is required\cite{dirk2018}\cite{jac2011}. 
\begin{definition}[RIP$_{p,q}$(see Definition 1 in \cite{jac2011})]
    \label{wrippq}
    For given  $p,\,q>0$,  the matrix $\Phi\in \mathbb{R}^{m\times{N}}$ satisfies RIP$_{p,q}$  of order s with the RIP constant $\delta_s>0$ if there exists a constant $\mu_{p,q}>0$ such that
    $$
    \mu_{p,q}(1-\delta_{s})^{1/q}\|x\|_{q}\le\|\Phi x\|_{p}\le\mu_{p,q}(1+\delta_{s})^{1/q}\|x\|_{q}
    $$
    for arbitrary vector with its support set $S$ satisfying $|S|\leqslant s$.
\end{definition}
The map between $\ell_{p}^{m}=(\mathbb{R}^m,\,\|\cdot \|_p)$ and $\ell_{q}^{N}=(\mathbb{R}^N,\,\|\cdot \|_q)$ is similar to being isometric when $\delta_{s}$ is small enough. If $p=q=2$, RIP$_{2,2}$ is equivalent to RIP.

A reconstruction guarantee for the $\omega$-BPDQ$_p$ model is proved below. Briefly, if the sensing matrix $\Phi$ satisfies RIP$_{p,q}$, stable and robust recovery can be guaranteed with a certain probability for sparse or compressible vectors with $\ell_p$ norm noise constraints, where $p,q\in[2,+\infty)$.

To describe the stability, the error of best weighted s-term approximation is involved. For the compressible vectors which are not exactly sparse, stability ensures that the reconstructed vector remains sufficiently close to the original one.

For $s\geqslant1$, the error of best weighted s-term approximation of the vector $\boldsymbol{x}\in\mathbb{C}^N$ is defined as
\begin{equation}
\label{sigmax}    \sigma_s(\boldsymbol{x})_{\omega,1}=\inf_{\|\boldsymbol{z}\|_{\omega,0}\leqslant s}\|\boldsymbol{x-z}\|_{\omega,1}.
\end{equation}

The robustness is represented by the constraint $\|\Phi x-y\|_p\leqslant\epsilon$, characterizes the ability of the reconstruction method to handle noise and errors in the measurements.
\begin{theorem}
    \label{thmripp2}
    For any given weight $\omega=[\omega_1,\ldots,\omega_N]^T\in [\theta,1]^N$, $\theta\in(0,1)$, which is defined as \eqref{w}, if the matrix $\Phi$ satisfies RIP$_{p,q}$ of order $2s$ with $\delta_{2s}$, $s\geqslant2$ and $p,q\in[2,+\infty)$, then the vector $x\in\mathbb{C}^N$ and the solution of \eqref{wbpdqp} satisfy
    \begin{equation*}
        \|\hat{x}-x\|_q\leqslant A\frac{\epsilon}{\mu_{p,q}}+Bs^{1/q-1}\sigma_{s}(x)_{\omega,1}.
    \end{equation*}
    where $\epsilon$ is defined as \eqref{wbpdqp}, $\sigma_s(\boldsymbol{x})_{\omega,1}$ is defined as \eqref{sigmax}, A and B depend on $\delta_{2s}$, $p$, $q$ and $\theta$.
\end{theorem}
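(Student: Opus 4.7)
The plan is to adapt the classical Candès-style RIP argument to the weighted norm and the $\ell_p$ data-fidelity constraint. Let $h=\hat x-x$ denote the reconstruction error. Since $x$ is feasible for \eqref{wbpdqp} and $\hat x$ is optimal, the triangle inequality in the constraint yields $\|\Phi h\|_p\le 2\epsilon$, while optimality gives $\|\hat x\|_{\omega,1}\le\|x\|_{\omega,1}$. Picking $S\subset[N]$ to be the support of a best weighted $s$-term approximation to $x$ and splitting the weighted $\ell_1$ norm on $S$ and $S^c$ through the (reverse) triangle inequality produces the weighted cone condition
$$
\|h_{S^c}\|_{\omega,1}\le \|h_S\|_{\omega,1}+2\sigma_s(x)_{\omega,1}.
$$

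Next, I sort the entries of $h_{S^c}$ by decreasing magnitude and partition $S^c$ into consecutive blocks $T_1,T_2,\ldots$ of size $s$; set $T_{01}=S\cup T_1$, which has cardinality at most $2s$. The lower RIP$_{p,q}$ inequality on $T_{01}$ gives
$$
\mu_{p,q}(1-\delta_{2s})^{1/q}\|h_{T_{01}}\|_q \le \|\Phi h_{T_{01}}\|_p \le \|\Phi h\|_p+\sum_{j\ge 2}\|\Phi h_{T_j}\|_p,
$$
and the upper RIP$_{p,q}$ inequality applied blockwise (each $T_j$ is $s$-sparse) bounds $\|\Phi h_{T_j}\|_p\le \mu_{p,q}(1+\delta_{2s})^{1/q}\|h_{T_j}\|_q$. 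The standard sorting estimate $\|h_{T_j}\|_\infty\le s^{-1}\|h_{T_{j-1}}\|_1$ together with Hölder gives the tail bound $\sum_{j\ge 2}\|h_{T_j}\|_q\le s^{1/q-1}\|h_{S^c}\|_1$.

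The final step is to close the loop with the cone condition. Since $\omega_i\ge\theta$, $\|h_{S^c}\|_1\le\theta^{-1}\|h_{S^c}\|_{\omega,1}$, and since $\omega_i\le 1$ together with Hölder on $S$, $\|h_S\|_{\omega,1}\le\|h_S\|_1\le s^{1-1/q}\|h_S\|_q\le s^{1-1/q}\|h_{T_{01}}\|_q$. Substituting back produces a self-referential inequality of the form
$$
\|h_{T_{01}}\|_q \le \frac{2\epsilon}{\mu_{p,q}(1-\delta_{2s})^{1/q}}+\rho\,\|h_{T_{01}}\|_q + C\, s^{1/q-1}\sigma_s(x)_{\omega,1},
$$
with $\rho=\theta^{-1}\bigl((1+\delta_{2s})/(1-\delta_{2s})\bigr)^{1/q}$ and $C$ depending on $\delta_{2s},p,q,\theta$. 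Rearranging and then using $\|h\|_q\le \|h_{T_{01}}\|_q+\sum_{j\ge 2}\|h_{T_j}\|_q$ delivers the stated bound, with $A,B$ explicit functions of $\delta_{2s},p,q,\theta$.

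The main obstacle is ensuring the contraction $\rho<1$, which is the weighted analogue of the usual small-RIP-constant requirement. The factor $\theta^{-1}$ arising from converting $\|h_{S^c}\|_{\omega,1}$ to $\|h_{S^c}\|_1$ forces a tighter bound on $\delta_{2s}$ than in the unweighted regime; this condition is implicit in the statement because $A,B$ are declared to depend on $\theta$ as well as $\delta_{2s},p,q$. A sharper version of this inequality can be obtained by splitting $S$ along $\widetilde T$ and $\widetilde T^c$ before invoking Hölder, or by decomposing $h_{S^c}$ according to weighted magnitudes $\omega_i|h_i|$ rather than raw magnitudes, but such refinements improve constants only and leave the proof architecture unchanged.
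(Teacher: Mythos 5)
Your proof has a fatal gap at the ``rearranging'' step: the contraction coefficient you obtain, $\rho=\theta^{-1}\bigl((1+\delta_{2s})/(1-\delta_{2s})\bigr)^{1/q}$, satisfies $\rho>1$ for \emph{every} admissible choice of parameters, since $\theta^{-1}>1$ and $(1+\delta_{2s})/(1-\delta_{2s})>1$. Even in the unweighted case $\theta=1$, and even in the limit $\delta_{2s}\to0$, your coefficient only tends to $1$ from above, so no smallness condition on $\delta_{2s}$ can make it a contraction --- contrary to your closing remark that the requirement $\rho<1$ is ``implicit in the statement.'' Consequently the self-referential inequality cannot be rearranged (you would be dividing by the negative quantity $1-\rho$), and the argument collapses. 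The root cause is the step $\|\Phi h_{T_{01}}\|_p\le\|\Phi h\|_p+\sum_{j\ge2}\|\Phi h_{T_j}\|_p$ followed by blockwise upper RIP: the triangle inequality is too lossy, because it bounds the interaction between $h_{T_{01}}$ and the tail blocks by a quantity that remains of order one even when $\Phi$ is a perfect isometry. This is exactly why the classical unweighted RIP recovery proofs do not take this route either.

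What is needed --- and what the paper does --- is a restricted-orthogonality estimate: the cross terms must be controlled by a constant that vanishes as the RIP constants go to $0$. Since for $p\ne2$ one cannot expand $\|\Phi h_{T_{01}}\|_p^2$ as an inner product, the paper introduces the normalized duality mapping $(J(u))_i=\|u\|_p^{2-p}|u_i|^{p-1}\operatorname{sign}(u_i)$ and proves (Lemma \ref{lemma2}, via a smoothness inequality for $\ell_p$ spaces) that $|\langle J(\Phi u),\Phi v\rangle|\le\mu_{p,q}^2C_{p,q}\|u\|_q\|v\|_q$ for disjointly supported sparse $u,v$, where $C_{p,q}$ is built from differences of RIP-type terms and is small when the RIP constants are small. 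The paper's proof then writes $\|\Phi h_{S_{01}}\|_p^2=\langle J(\Phi h_{S_{01}}),\Phi h\rangle-\langle J(\Phi h_{S_{01}}),\Phi(\sum_{l\ge2}h_{S_l})\rangle$, bounds the first term via H\"older and feasibility ($\|\Phi h\|_p\le2\epsilon$), bounds the second via Lemma \ref{lemma2}, and divides by $\|h_{S_{01}}\|_q$; this produces a contraction coefficient of the form $C_{p,q}/(\theta^2(1-\delta_{2s})^{2/q})$, which can be made less than $1$ for small $\delta_{2s}$. Your cone condition, block decomposition, and H\"older steps are all sound (the paper sorts the tail by $|h_i|\omega_i^{-1}$ rather than by raw magnitude, which only changes the powers of $\theta$ in the constants), but without a lemma of this restricted-orthogonality type your architecture cannot close.
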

Before proving the theorem, it is necessary to show that for any two vectors whose supports are disjoint, some form of inner product associated with the matrix and the vector can be expressed in terms of the norms of the two vectors if the matrices satisfy RIP$_{p,q}$ of the corresponding order.
\begin{lemma}[see Lemma 2 in \cite{jac2011}]
    \label{lemma2}
    If $u,\,v\in\mathbb{R}^N$ such that $\|u\|_{0}=s,\,\|v\|_{0}=s'$, $\operatorname{supp}(u)\cap \operatorname{supp}(v)=\emptyset$ and the matrix $\Phi$ satisfies RIP$_{p,q}$, and $p,q\in[2,+\infty)$ of order $s,\,s',\,s+s'$ with $\delta_s$, $\delta_{s'}$ and $\delta_{s+s'}$ respectively, then 
    \begin{equation}
        |\langle J(\Phi u),\Phi v\rangle|\leqslant \mu_{p,q}^2C_{p,q}\|u\|_q\|v\|_q,
    \end{equation}
    where $(J(u))_i=\|u\|_p^{2-p}|u_i|^{p-1}\operatorname{sign}(u_i),\,C_{p,q}=C_{p,q}(\Phi,\,s,\,s')$.
\end{lemma}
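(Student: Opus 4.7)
The plan is to imitate the Hilbert-space disjoint-support inner-product lemma by exploiting the convexity of $f(x)=\tfrac{1}{2}\|x\|_p^2$, whose gradient is precisely the duality map $J$, together with the disjoint-support identity $\|u\pm v\|_q^q=\|u\|_q^q+\|v\|_q^q$. Since no clean polarization identity is available for $p\neq 2$, a free parameter $t>0$ will be introduced and optimized at the end.

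\textbf{Step 1 (gradient inequality).} A direct differentiation shows $\nabla f(x)=J(x)$, so by convexity, for any $x,y\in\mathbb{R}^m$,
$$\pm\langle J(x),y\rangle\leqslant \tfrac12\bigl(\|x\pm y\|_p^2-\|x\|_p^2\bigr).$$
Setting $x=\Phi u$, $y=\pm t\Phi v$ and taking the maximum over signs gives
$$|\langle J(\Phi u),\Phi v\rangle|\leqslant \frac{1}{2t}\Bigl(\max_{\pm}\|\Phi(u\pm tv)\|_p^2-\|\Phi u\|_p^2\Bigr).$$

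\textbf{Steps 2--3 (insert RIP, optimize).} Since $\operatorname{supp}(u)\cap\operatorname{supp}(v)=\emptyset$, the vectors $u\pm tv$ are $(s+s')$-sparse with $\|u\pm tv\|_q^q=\|u\|_q^q+t^q\|v\|_q^q$. Applying the upper RIP$_{p,q}$ bound at order $s+s'$ to $\Phi(u\pm tv)$ and the lower bound at order $s$ to $\Phi u$ then gives
$$|\langle J(\Phi u),\Phi v\rangle|\leqslant \frac{\mu_{p,q}^2}{2t}\Bigl[(1+\delta_{s+s'})^{2/q}\bigl(\|u\|_q^q+t^q\|v\|_q^q\bigr)^{2/q}-(1-\delta_s)^{2/q}\|u\|_q^2\Bigr].$$
Substituting $t=\tau\|u\|_q/\|v\|_q$ factors $\mu_{p,q}^2\|u\|_q\|v\|_q$ out, so the infimum over $\tau>0$ of the remaining function
$$h(\tau)=\frac{1}{2\tau}\bigl[(1+\delta_{s+s'})^{2/q}(1+\tau^q)^{2/q}-(1-\delta_s)^{2/q}\bigr]$$
serves as the constant $C_{p,q}(\Phi,s,s')$.

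The hard part will be controlling $C_{p,q}$ itself: unlike the $p=q=2$ case, where polarization yields the clean constant $\delta_{s+s'}$, here $C_{p,q}$ is only implicit. Finiteness of the infimum is easy (the bracket is nonnegative at $\tau=0^+$ while $h$ grows like $\tau$ at infinity, so a positive minimum exists), and one verifies that $C_{p,q}\to 0$ as $\delta_s,\delta_{s+s'}\to 0$ — consistent with the identity-matrix case, where $J(u)_i$ vanishes outside $\operatorname{supp}(u)$ — but extracting the quantitative dependence on the RIP constants needed downstream by Theorem \ref{thmripp2} requires a finer asymptotic analysis of $h$ near its minimizer.
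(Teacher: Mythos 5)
Your proof is correct, and it reaches the stated bound by a genuinely different route than the paper. The paper does not argue from bare convexity: it first checks that $J$ is the normalized duality mapping of $\ell_p^m$ (i.e., $\langle J(x),x\rangle=\|x\|_p^2$ and $\|J(x)\|_{p^*}=\|x\|_p$), then invokes Bynum's two-sided smoothness inequality $\|x+y\|_p^2\leqslant\|x\|_p^2+2\langle J(x),y\rangle+(p-1)\|y\|_p^2$ for $p\geqslant2$, applies the RIP \emph{upper} bounds at orders $s$ and $s'$ to $\Phi u$ and $t\Phi v$ and the \emph{lower} bound at order $s+s'$ to $\Phi(u-tv)$, and optimizes in $t$ to get the closed-form constant $C_{p,q}=\{[(1+\delta_s)^{2/q}-(1-\delta_{s+s'})^{2/q}]\,[(p-1)(1+\delta_{s'})^{2/q}-(1-\delta_{s+s'})^{2/q}]\}^{1/2}$. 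You use only the first-order convexity inequality for $\tfrac12\|\cdot\|_p^2$, apply the RIP bounds in the opposite pattern (upper at order $s+s'$, lower at order $s$), and leave the constant as the implicit $\inf_{\tau>0}h(\tau)$. The trade-off: the smoothness constant $(p-1)$ is sharp at $p=2$, so for $p=q=2$ the paper's constant degenerates to the classical $[(\delta_s+\delta_{s+s'})(\delta_{s'}+\delta_{s+s'})]^{1/2}=O(\delta)$, whereas your infimum scales like $\delta^{(q-1)/q}$, i.e.\ only $O(\delta^{1/2})$ at $q=2$ --- quantitatively weaker exactly where the parallelogram structure is available. This smallness is what Theorem \ref{thmripp2} implicitly consumes (it needs $C_{p,q}<\theta^2(1-\delta_{2s})^{2/q}$), so your route forces smaller RIP constants; but this is not a gap, since your constant still vanishes as $\delta_s,\delta_{s+s'}\to0$ and the lemma only asserts existence of some $C_{p,q}(\Phi,s,s')$. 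Conversely, your argument is more elementary (no Banach-space smoothness theorem), produces a constant with no $(p-1)$ blow-up as $p$ grows (so the paper's remark after the lemma would not apply to it), and --- a point in its favor --- keeps the disjoint-support term exact: the paper's own display replaces $\|u-tv\|_q^2=(1+t^q)^{2/q}$ by $1+t^2$, a substitution that is valid at $q=2$ but goes the wrong way inside a lower bound when $q>2$, so your handling of general $q$ via $(1+\tau^q)^{2/q}$ is actually the more careful one.
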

This proof is based on Lemma 2 approach in \cite{jac2011}, with adjustments made on the norm.
\begin{proof}
   According to the definition of $J$, $J(\lambda x)=\lambda J(x)$ for any $\lambda\in\mathbb{R}$ and $x\in\mathbb{R}^m$. Without loss of generality, assume that $\|u\|_q=\|v\|_q=1$.

   In the Banach space $\ell_p^m=(\mathbb{R}^m,\|\cdot\|_p)$ for $p\geqslant2$, there is a dual space $\ell_{p^*}^m=(\mathbb{R}^m,\|\cdot\|_{p^*})$ such that $\frac{1}{p}+\frac{1}{p^*}=1$.

With the definition of $J(x)$, we have
   \begin{align*}
       \langle J(x),x\rangle&=\sum_i\|x\|_p^{2-p}|x_i|^{p}=\|x\|_p^2;\\
       \|J(x)\|_{p^*}^2&=[\sum_i(\|x\|_p^{2-p}|x_i|^{p-1})^{p^*}]^\frac{2}{p^*}\\
       &=\|x\|_p^{2(2-p)}(\sum_i|x_i|^p)^{\frac{2(p-1)}{p}}\\
       &=\|x\|_p^2.
   \end{align*}

   According to the Definition 1 in \cite{DePrima1971}, $J(x)$ is a normalized duality mapping of $\ell_p^m$ to $\ell_{p^*}^m$.
   Then from Theorem 3 in \cite{Bynum_1976}, there is an inequality that:
$$
\|x+y\|_{p}^{2} \leqslant\|x\|_{p}^{2}+2\langle J(x), y\rangle+(p-1)\|y\|_{p}^{2}.
$$

Let $y=-y$, we have
\begin{equation}
    \label{l1}
    2\langle J(x), y\rangle \leqslant\|x\|_{p}^{2}+(p-1)\|y\|_{p}^{2}-\|x-y\|_{p}^{2}.
\end{equation}
Let $x=\Phi u$ and $y=t \Phi v$ for some $t>0$. With RIP$_{p,q}$ of order $s,\,s',\,s+s'$, it implies that 
$$
2t\langle J(\Phi u), \Phi v\rangle\leqslant\|\Phi u\|_{p}^{2}+(p-1)\|t\Phi v\|_{p}^{2}-\|\Phi(u-t v)\|_{p}^{2}.
$$

With RIP$_{p,q}$ of order $s,\,s',\,s+s'$, it implies that 
\begin{equation}
    \label{ie1}
    \begin{split}
        2 \mu_{p, q}^{-2} t\langle J(\Phi u), \Phi v\rangle &\leqslant\left(1+\delta_{s}\right)^{2/q}
+(p-1)\left(1+\delta_{s^{\prime}}\right)^{2/q} t^{2}\\
&\quad-\left(1-\delta_{s+s^{\prime}}\right)^{2/q}\left(1+t^{2}\right).
    \end{split}
\end{equation}

If let $t=-t$, we have
\begin{equation}
    \label{ie2}
    \begin{split}
         -2 \mu_{p, q}^{-2} t\langle J(\Phi u), \Phi v\rangle &\leqslant\left(1+\delta_{s}\right)^{2/q}
+(p-1)\left(1+\delta_{s^{\prime}}\right)^{2/q} t^{2}\\
&\quad-\left(1-\delta_{s+s^{\prime}}\right)^{2/q}\left(1+t^{2}\right).
    \end{split}
\end{equation}

Combine \eqref{ie1} and \eqref{ie2}, the left hand side can be written as the absolute value:
\begin{equation*}
    \begin{split}
        2 \mu_{p, q}^{-2} t|\langle J(\Phi u), \Phi v\rangle |&\leqslant\left(1+\delta_{s}\right)^{2/q}
+(p-1)\left(1+\delta_{s^{\prime}}\right)^{2/q} t^{2}\\
&\quad-\left(1-\delta_{s+s^{\prime}}\right)^{2/q}\left(1+t^{2}\right).
    \end{split}
\end{equation*}

Let 
\begin{equation*}
    \begin{split}
        f(t)&=\frac{1}{2t}(
\left(1+\delta_{s}\right)^{2/q}
+(p-1)\left(1+\delta_{s^{\prime}}\right)^{2/q} t^{2}\\
&\quad-\left(1-\delta_{s+s^{\prime}}\right)^{2/q}\left(1+t^{2}\right)).
    \end{split}
\end{equation*}

Then find its minimum with derivative,
\begin{equation*}
    \begin{split}
        \min_{t>0}f(t)&=\{[(1+\delta_s)^{2/q}-(1+\delta_{s+s'})^{2/q}]\\
       &\quad [(p-1)(1+\delta_{s'})^{2/q}-(1+\delta_{s+s'})^{2/q}]\}^{1/2},
    \end{split}
\end{equation*}

i.e.
\begin{equation}
    \begin{split}
    \label{l2}
        \mu_{p, q}^{-2} |\langle J(\Phi u), \Phi v\rangle|&\leqslant
[(1+\delta_s)^{2/q}-(1+\delta_{s+s'})^{2/q}]^{1/2}\\
&[(p-1)(1+\delta_{s'})^{2/q}-(1+\delta_{s+s'})^{2/q}]^{1/2},
    \end{split}
\end{equation}

Then define $C_{p,q}=[(1+\delta_s)^{2/q}-(1+\delta_{s+s'})^{2/q}]^{1/2}[(p-1)(1+\delta_{s'})^{2/q}-(1+\delta_{s+s'})^{2/q}]^{1/2}$, it implies that $|\langle J(\Phi u),\Phi v\rangle|\leqslant \mu_{p,q}^2C_{p,q}\|u\|_q\|v\|_q$.
\end{proof}
\begin{proof}[proof of \ref{thmripp2}]
Assume $\hat{x}=x+h$, then $h$ is the noise. While $x$ can be any vector in $\mathbb{C}^N$, $h$ is also an arbitrary vector.
There exists an index set $S =: S_0$ with $|S|= s$ such that $\sigma_{s}(x)_{\omega,1}=\|x-x_{S}\|_{\omega,1}=\|x_{S^{c}}\|_{\omega,1}$. 
We make a nonincreasing rearrangement of the entries in set $S^{c}$ based on the values of $h_{i}\omega_{i}^{-1}$ with $i\in S^c$, and partition $S_0^c$ as $\{S_l:1\leqslant l\leqslant\lceil\frac{N-s}{s}\rceil \}$ with $|S_l|=s$ for $1\leqslant l\leqslant\lfloor\frac{N-s}{s}\rfloor$ and $|S_l|\leqslant s$ for $\lceil\frac{N-s}{s}\rceil$.
So that $|h_j|\omega_j^{-1}\leqslant|h_k|\omega_k^{-1}$ for all $j\in S_l$, $k\in S_{l-1}$, $l\geqslant2$. We denote that $S_{01}=S_0\cup S_1$.

     According to the definition of $\ell_{\omega,1}$ norm and $q\geqslant2$, 
     \begin{equation}
     \label{w2q}
     \begin{split}
         \|h_{S}\|_{\omega,1}&=\sum_{j\in S}|h_{j}|\omega_{j}\\
         &\leqslant(\sum_{j\in S}|h_{j}|^q)^{1/q}(\sum_{j\in S}\omega_{j}^{\frac{q}{q-1}})^{1-\frac{1}{q}}\\
         &\leqslant s^{1-\frac{1}{q}}\|h_{S}\|_{q}.
            \end{split}
     \end{equation}
  
   According to the definition of the matrix $J$ in Lemma \ref{lemma2},
   $$
    \|\Phi h_{S_{01}}\|_p^2=\langle J(\Phi h_{S_{01}}),\Phi h\rangle-\langle J(\Phi h_{S_{01}}),\Phi\left(\sum_{l\geqslant2}h_{S_l}\right)\rangle.
    $$
    Since $h=\hat{x}-x$,
    \begin{equation}
        \label{phih}
        \|\Phi h\|_p\leqslant \|\Phi x-y\|_p+\|\Phi\hat{x}-y\|_p\leqslant 2\epsilon.
    \end{equation}
     By the H$\ddot{o}$lder inequality and $\omega$-$RIP_{p,q}$ on $S_{01}$,
 \begin{align*}
    |\langle J(\Phi h_{S_{01}}),\Phi h\rangle|&\leqslant\|J(\Phi h_{S_{01}})\|_{\frac{p}{p-1}}\|\Phi h\|_p\\
    &=\|\Phi h_{S_{01}}\|_p\|\Phi h\|_p\\
    &\leqslant 2\epsilon\|\Phi h_{S_{01}}\|_p\\
    &\leqslant 2\epsilon\mu_{p,q}(1+\delta_{2s})^{1/q}\|h_{S_{01}}\|_q.
  \end{align*}
 When $l\geqslant2$, $S_{01}$ and $S_l$ are disjoint. By Lemma \ref{lemma2}, if the matrix $\Phi$ satisfies $\omega$-$RIP_{p,q}$ of order $2s$, then
         $$
|\langle J(\Phi h_{S_{01}}),\Phi h_{S_l}\rangle|\leqslant \mu_{p,q}^2C_{p,q}\|h_{S_{01}}\|_q\|h_{S_l}\|_q.
    $$
    Combining the equations above, we have
     \begin{align*}
    \mu_{p,q}^2(1-\delta_{2s})^{2/q}\|h_{S_{01}}\|_q^{2}&\leqslant\|\Phi h_{S_{01}}\|_p^2\\
    &\leqslant 2\epsilon\mu_{p,q}(1+\delta_{2s})^{1/q}\|h_{S_{01}}\|_q\\
&\quad+\mu_{p,2}^2C_{p,q}\|h_{S_{01}}\|_q\sum_{l\geqslant2}\|h_{S_l}\|_q.
  \end{align*}
    Divide both sides by$\|h_{S_{01}}\|_q$,
    \begin{equation}
    \label{l2tl2}
        \|h_{S_{01}}\|_q\leqslant\frac{2(1+\delta_{2s})^{1/q}}{\mu_{p,q}(1-\delta_{2s})^{2/q}}\epsilon+\frac{C_{p,q}}{(1-\delta_{2s})^{2/q}}\sum_{l\geqslant2}\|h_{S_l}\|_q.
    \end{equation}
     
     For arbitrary $k\in S_l$, assume the parameters $\alpha_k=(\sum_{j\in S_l}\omega_j^2)^{-1}\omega_k^2\leqslant(s\theta^2)^{-1}\omega_k^2$, obviously that $\sum_{k\in S_l}\alpha_{k}=1$.

Then for $ \forall j\in S_l,\,l\geqslant 2,$ 

$$
    |h_j|\omega_j^{-1}\leqslant \sum_{k\in S_{l-1}}\alpha_k|h_k|\omega_k^{-1}\leqslant (s\theta^2)^{-1}\sum_{k\in S_{l-1}}|h_k|\omega_k.
    $$
    By Cauchy-Schwartz inequality and $q\geqslant2$, for any $l\geqslant 2$, 
    \begin{equation*}
    \begin{split}
         \|h_{S_l}\|_q&=(\sum_{j\in S_l}(|h_j|\omega_j^{-1})^q\omega_j^q)^{\frac{1}{q}}\\
         &\leqslant\frac{s^{1/q}}{s\theta^2}\|h_{S_{l-1}}\|_{\omega,1}\\
         &\leqslant \frac{1}{\theta^2}s^{1/q-1}\|h_{S_{l-1}}\|_{\omega,1},
    \end{split}
\end{equation*}

\begin{equation}
\label{ll2q}
\begin{split}
     \|h_{S_{01}^c}\|_q&\leqslant \sum_{l\geqslant 2}\|h_{S_{l}}\|_q\\
     &\leqslant\frac{1}{\theta^2}s^{1/q-1}(\|h_{S_1}\|_{\omega,1}+\|h_{S_2}\|_{\omega,1}+\ldots)\\
     &\leqslant\frac{1}{\theta^2}s^{1/q-1}\|h_{S^c}\|_{\omega,1}.
     \end{split}
\end{equation}

  Substitute $\|h_{S_l}\|_q$ in \eqref{l2tl2} with the above estimate, we have  
    \begin{equation}
    \label{hs2}
        \begin{split}
\|h_S\|_q&\leqslant\|h_{S_{01}}\|_q\\
&\leqslant\frac{2(1+\delta_{2s})^{1/q}}{\mu_{p,q}(1-\delta_{2s})^{2/q}}\epsilon+\frac{C_{p,q}}{\theta^2(1-\delta_{2s})^{2/q}}s^{1/q-1}\|h_{S_{01}^c}\|_{\omega,1}.
\end{split}
    \end{equation}

  By the triangle inequality of the norms,
    \begin{align*}
      \|x\|_{\omega,1}+\|(x-\hat{x})_{S^{c}}\|_{\omega,1}&\leqslant 2\|x_{S^{c}}\|_{\omega,1}+\|\hat{x}_{S^{c}}\|_{\omega,1}+\|x_{S}\|_{\omega,1}\\
      &\leqslant 2\sigma_{s}(x)_{\omega,1}\\
      &\quad+\|(x-\hat{x})_{S}\|_{\omega,1}+\|\hat{x}\|_{\omega,1}.
    \end{align*}
    Since $h=\hat{x}-x$ and $\|\hat{x}\|_{\omega,1}\leqslant\|x\|_{\omega,1}$, 
    \begin{equation}
    \label{hsc}
         \begin{split}
\|h_{S_{01}^c}\|_{\omega,1}&\leqslant\|h_{S^{c}}\|_{\omega,1}\\
&\leqslant 2\sigma_{s}(x)_{\omega,1}+\|\hat{x}\|_{\omega,1}-\|x\|_{\omega,1}+\|h_{S}\|_{\omega,1}\\
     & \leqslant 2\sigma_{s}(x)_{\omega,1}+\|h_{S}\|_{\omega,1}.
      \end{split}
    \end{equation}
   
    Then 
    \begin{equation*}
    \begin{split}
        \|h_{S}\|_q&\leqslant\frac{2(1+\delta_{2s})^{1/q}}{\mu_{p,q}(1-\delta_{2s})^{2/q}}\epsilon\\
        &\quad+\frac{C_{p,q}}{\theta^2(1-\delta_{2s})^{2/q}}s^{1/q-1}(2\sigma_{s}(x)_{\omega,1}+s^{(1-1/q)}\|h_{S}\|_q).
         \end{split}
\end{equation*}
       
After simplifying the inequation above, we get         
\begin{equation}
\label{hsq}
\begin{split}
      \|h_{S}\|_q&\leqslant\frac{2\theta^2(1+\delta_{2s})^{1/q}}{\mu_{p,q}[C_{p,q}-\theta^2(1-\delta_{2s})^{2/q}]}\epsilon\\
      &\quad+\frac{C_{p,q}\theta^2s^{1/q-1}}{[C_{p,q}-\theta^2(1-\delta_{2s})^{2/q}]}\sigma_{s}(x)_{\omega,1}.
      \end{split}
\end{equation}

With \eqref{l2tl2}, \eqref{ll2q}, \eqref{hsc}, \eqref{hsq} and \eqref{w2q}, the error bound in $\ell_q$ norm is 
\begin{align*}
    &\|h\|_q\leqslant\|h_{S_{01}}\|_q+\|h_{S_{01}^c}\|_q\\
    &\leqslant\frac{2(1+\delta_{2s})^{1/q}}{\mu_{p,q}(1-\delta_{2s})^{2/q}}\epsilon+\frac{C_{p,q}}{(1-\delta_{2s})^{2/q}}\sum_{l\geqslant2}\|h_{S_l}\|_q+\|h_{S_{01}^c}\|_q\\
    &\leqslant\frac{2(1+\delta_{2s})^{1/q}}{\mu_{p,q}(1-\delta_{2s})^{2/q}}\epsilon+[\frac{C_{p,q}}{(1-\delta_{2s})^{2/q}}+1]s^{1/q-1}\sum_{l\geqslant2}\|h_{S_l}\|_q\\
    &\leqslant\frac{2(1+\delta_{2s})^{1/q}}{\mu_{p,q}(1-\delta_{2s})^{2/q}}\epsilon\\
    &\quad+[\frac{C_{p,q}}{(1-\delta_{2s})^{2/q}}+1]s^{1/q-1}\frac{1}{\theta^2}s^{1/q-1}\|h_{S^c}\|_{\omega,1}\\
    &\leqslant\frac{2(1+\delta_{2s})^{1/q}}{\mu_{p,q}(1-\delta_{2s})^{2/q}}\epsilon\\
    &\quad+[\frac{C_{p,q}}{(1-\delta_{2s})^{2/q}}+1]\frac{1}{\theta^2}s^{1/q-1}(2\sigma_{s}(x)_{\omega,1}+s^{(1-1/q)}\|h_{S}\|_q)\\
    &\leqslant A\frac{\epsilon}{\mu_{p,q}}+Bs^{1/q-1}\sigma_{s}(x)_{\omega,1},
\end{align*}
where $A$ and $B$ are constants depending on $C_{p,q}$, $\delta_{2s}$ and $\theta$.

\end{proof}
\begin{remark}
 The constants $C_{p,q}$ and $B$ in Theorem \ref{thmripp2} and Lemma \ref{lemma2} increase with $p$.
In the $\omega$-BPDQ$_\infty$ model, if $p=\infty$, a valid error bound estimate cannot be obtained with infinitely large constants.
With this process of estimating error bounds, the signals are not guaranteed to be reconstructed.
\end{remark}

\subsection{$\omega$-RNSP$_{p,q}$}

$\omega$-RNSP$_{p,q}$ is defined by substituting the $\ell_1$ norm and $\ell_2$ constraint in $\ell_q$-RNSP with $\ell_{\omega,1}$ and $\ell_p$ norm, respectively. These adjustments are analogous to transitioning from $\ell_1$ minimization to $\omega$-BPDQ$_{p}$. 
\begin{definition}[$\omega$-RNSP$_{p,q}$]
        \label{robustnsp}
    For given weights $\omega$, constants $\rho\in(0.1),\,\gamma>0$ and $p\geqslant1$, the matrix $A\in \mathbb{C}^{m\times{N}}$ is said to satisfy the weighted robust null space property($\omega$-RNSP$_{p,q}$) of order s if it satisfies:
    \begin{equation}
        \|\upsilon_{S}\|_{q}\le\frac{\rho}{s^{1-\frac{1}{q}}}\|\upsilon_{S^{c}}\|_{\omega,1}+\gamma\|A\upsilon \|_{p}
    \end{equation}
    for all $\upsilon\in\mathbb{C}^{N}$and all $S\subset[N]$ with $|S|\le s$.
\end{definition}

Similar to RIP$_{p,q}$, $\omega$-RNSP$_{p,q}$ can also provide stable and robust error bound.
\begin{theorem}
    \label{thmrnsp}
    For any given weight $\omega=[\omega_1,\ldots,\omega_N]^T\in [0,1]^N$, which is defined as \eqref{w}, if the matrix $A\in \mathbb{C}^{m\times{N}}$ satisfies $\omega$-RNSP$_{p,q}$ of order s with $p,q\in[1,+\infty]$, $\rho\in(0.1)$, $\gamma>0$ and $s\geqslant2$, then for any $1\leqslant r\leqslant q$, the vector $x\in\mathbb{C}^N$ and the solution of \eqref{wbpdqp} satisfy
   \begin{equation}
        \|\hat{x}-x\|_r\leqslant As^{1/r-1/q}\epsilon+Bs^{1/r-1}\sigma_{s}(x)_{\omega,1},
    \end{equation}
     where $\epsilon$ is defined as \eqref{wbpdqp}, $\sigma_s(\boldsymbol{x})_{\omega,1}$ is defined as \eqref{sigmax}, A and B depend on $\rho$, $\gamma$ and $\theta$.
\end{theorem}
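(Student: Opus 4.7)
The plan is to follow the same overall template as the proof of Theorem \ref{thmripp2}, but invoking $\omega$-RNSP$_{p,q}$ directly in place of the RIP-based inner-product estimate of Lemma \ref{lemma2}. Set $h:=\hat{x}-x$; feasibility of both $x$ and $\hat{x}$ in \eqref{wbpdqp} gives $\|Ah\|_p\le 2\epsilon$, and optimality of $\hat{x}$ gives $\|\hat{x}\|_{\omega,1}\le\|x\|_{\omega,1}$. Pick $S\subset[N]$ with $|S|=s$ realising the best weighted $s$-term approximation, so $\|x_{S^c}\|_{\omega,1}=\sigma_s(x)_{\omega,1}$. The same triangle-inequality manipulation carried out in the proof of Theorem \ref{thmripp2} then yields $\|h_{S^c}\|_{\omega,1}\le 2\sigma_s(x)_{\omega,1}+\|h_S\|_{\omega,1}$.

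Next, I would apply $\omega$-RNSP$_{p,q}$ to $h$ with the set $S$. Since $\omega_i\le 1$, H\"older's inequality gives $\|h_S\|_{\omega,1}\le\|h_S\|_1\le s^{1-1/q}\|h_S\|_q$; plugging this and the preceding bound into the defining RNSP inequality produces a self-referential estimate from which the $\rho\|h_S\|_q$ term can be absorbed onto the left, producing
\begin{equation*}
\|h_S\|_q\le\frac{2\rho}{(1-\rho)\,s^{1-1/q}}\sigma_s(x)_{\omega,1}+\frac{2\gamma}{1-\rho}\epsilon.
\end{equation*}
A further H\"older step upgrades this to $\|h_S\|_r\le s^{1/r-1/q}\|h_S\|_q$ for any $r\in[1,q]$, which is already of the form announced by the theorem.

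For the off-support part I would recycle the dyadic sorting from the proof of Theorem \ref{thmripp2}: rearrange the entries of $h$ on $S^c$ by decreasing $|h_i|\omega_i^{-1}$ and partition $S^c$ into consecutive blocks $S_1,S_2,\ldots$ of size $s$. The sortedness estimate $\|h_{S_l}\|_q\le\theta^{-2}s^{1/q-1}\|h_{S_{l-1}}\|_{\omega,1}$ carries over unchanged for $l\ge 2$, and multiplying through by $s^{1/r-1/q}$ and summing in $l$ gives $\sum_{l\ge 2}\|h_{S_l}\|_r\le\theta^{-2}s^{1/r-1}\|h_{S^c}\|_{\omega,1}$. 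The first block $S_1$ is the main obstacle, since it has no smaller predecessor to dominate it; I would handle it by a second invocation of $\omega$-RNSP$_{p,q}$, this time with $S_1$ in place of $S$, which bounds $\|h_{S_1}\|_q$ in terms of $\|h_{S_1^c}\|_{\omega,1}\le\|h_S\|_{\omega,1}+\|h_{S^c}\|_{\omega,1}$ and $\|Ah\|_p$, both of which have already been controlled.

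Finally, I would assemble $\|h\|_r\le\|h_S\|_r+\|h_{S_1}\|_r+\sum_{l\ge 2}\|h_{S_l}\|_r$, using $\|\cdot\|_r\le\|\cdot\|_1$ on the finite sequence of block norms (valid for $r\ge 1$) to combine the tail blocks. Substituting the three individual bounds and collecting the multiplicative factors produces the announced inequality $\|h\|_r\le A\,s^{1/r-1/q}\epsilon+B\,s^{1/r-1}\sigma_s(x)_{\omega,1}$, with $A$ and $B$ depending only on $\rho$, $\gamma$, and $\theta$. The genuinely nontrivial step is the treatment of the first block $S_1$; the remainder is careful bookkeeping of constants.
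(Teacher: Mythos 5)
Your proposal is correct and follows essentially the same route as the paper's proof: the same cone constraint $\|h_{S^c}\|_{\omega,1}\leqslant 2\sigma_s(x)_{\omega,1}+\|h_S\|_{\omega,1}$ from minimality, the same sorted block decomposition of $S^c$ with the $\theta^{-2}s^{1/r-1}$ tail estimate, and the same double invocation of $\omega$-RNSP$_{p,q}$ on $S$ and on $S_1$ (with $\|h_{S_1^c}\|_{\omega,1}$ bounded by already-controlled quantities); the only cosmetic difference is that you absorb the self-referential $\rho$-term on $\|h_S\|_q$, whereas the paper absorbs it on $\|h_{S^c}\|_{\omega,1}$. The paper also treats $r=q=\infty$ as a separate case, but your estimates carry over to it verbatim under the convention $1/\infty=0$, so this is not a substantive gap.
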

\begin{proof}
    For any $1\leqslant r\leqslant q$, $S\subset[N]$ and $|S|\leqslant s$, $\|v_S\|_r\leqslant s^{1/r-1/q}\|v_S\|_q$ for all $v\in \mathbb{C}^N$.\\
    So for all $v\in \mathbb{C}^N$, we have
    \begin{equation}
    \label{nsp_r}
        \|v_S\|_r\leqslant\frac{\rho}{s^{1-1/r}}\|v_{S^c}\|_{\omega,1}+\gamma s^{1/r-1/q}\|\Phi v\|_p.
    \end{equation}

Assume $\hat{x}=x+h$, then $h$ is the noise. While $x$ can be any vector in $\mathbb{C}^N$, $h$ is also an arbitrary vector.
There exists an index set $S =: S_0$ with $|S|\leqslant s$ such that $\sigma_{s}(x)_{\omega,1}=\|x-x_{S}\|_{\omega,1}=\|x_{S^{c}}\|_{\omega,1}$. 
We make a nonincreasing rearrangement of the entries in set $S^{c}$ based on the values of $h_{i}\omega_{i}^{-1}$ with $i\in S^c$, and partition $S_0^c$ as $\{S_l:1\leqslant l\leqslant\lceil\frac{N-s}{s}\rceil \}$ with $|S_l|=s$ for $1\leqslant l\leqslant\lfloor\frac{N-s}{s}\rfloor$ and $|S_l|\leqslant s$ for $\lceil\frac{N-s}{s}\rceil$.
So that $|h_j|\omega_j^{-1}\leqslant|h_k|\omega_k^{-1}$ for all $j\in S_l$, $k\in S_{l-1}$, $l\geqslant2$. We denote that $S_{01}=S_0\cup S_1$.

By the triangle inequality of the norms,
    \begin{align*}
      \|x\|_{\omega,1}+\|(x-\hat{x})_{S^{c}}\|_{\omega,1}&\leqslant 2\|x_{S^{c}}\|_{\omega,1}+\|\hat{x}_{S^{c}}\|_{\omega,1}+\|x_{S}\|_{\omega,1}\\
      &\leqslant 2\sigma_{s}(x)_{\omega,1}+\|(x-\hat{x})_{S}\|_{\omega,1}+\|\hat{x}\|_{\omega,1}.
    \end{align*}
    Since $h=\hat{x}-x$ and $\|\hat{x}\|_{\omega,1}\leqslant\|x\|_{\omega,1}$, 
 \begin{align*}
         \|h_{S^{c}}\|_{\omega,1}&\leqslant 2\sigma_{s}(x)_{\omega,1}+\|\hat{x}\|_{\omega,1}-\|x\|_{\omega,1}+\|h_{S}\|_{\omega,1}\\
      &\leqslant 2\sigma_{s}(x)_{\omega,1}+\|h_{S}\|_{\omega,1}.
    \end{align*}

    Suppose first that $q<\infty$.
   According to the definition of $\ell_{\omega,1}$ norm, and $1\leqslant r\leqslant q$, 
    \begin{align*}
        \|h_{S}\|_{\omega,1}&=\sum_{j\in S}|h_{j}|\omega_{j}\leqslant(\sum_{j\in S}|h_{j}|^r)^{1/r}(\sum_{j\in S}\omega_{j}^{\frac{r}{r-1}})^{1-\frac{1}{r}}\leqslant s^{1-\frac{1}{r}}\|h_{S}\|_{r}\\
        &\leqslant s^{1-\frac{1}{r}}(\frac{\rho}{s^{1-1/r}}\|h_{S^c}\|_{\omega,1}+\gamma s^{1/r-1/q}\|\Phi h\|_p)\\
        &\leqslant\rho\|h_{S^c}\|_{\omega,1}+\gamma s^{1-\frac{1}{q}}\|\Phi h\|_p.
    \end{align*}

    So we have
    \begin{align*}
        \|h_{S^{c}}\|_{\omega,1}&\leqslant 2\sigma_{s}(x)_{\omega,1}+\|h_{S}\|_{\omega,1}\\
        &\leqslant2\sigma_{s}(x)_{\omega,1}+\rho\|h_{S^c}\|_{\omega,1}+\gamma s^{1-\frac{1}{q}}\|\Phi h\|_p\\
        &\leqslant\frac{1}{1-\rho}(2\sigma_{s}(x)_{\omega,1}+\gamma s^{1-\frac{1}{q}}\|\Phi h\|_p).
    \end{align*}

    For arbitrary $k\in S_l$, assume the parameters $\alpha_k=(\sum_{j\in S_l}\omega_j^2)^{-1}\omega_k^2\leqslant(\theta^2s)^{-1}\omega_k^2$, obviously that $\sum_{k\in S_l}\alpha_{k}=1$.\\
    Then  $\forall j\in S_l$,\,$l\geqslant 2$,
$$
     |h_j|\omega_j^{-1}\leqslant \sum_{k\in S_{l-1}}\alpha_k|h_k|\omega_k^{-1}\leqslant (\theta^2s)^{-1}\sum_{k\in S_{l-1}}|h_k|\omega_k.
    $$
    
    By Cauchy-Schwartz inequality and $1\leqslant r\leqslant q$, for any $l\geqslant 2$, 
    \begin{align*}
        \|h_{S_l}\|_r=(\sum_{j\in S_l}(|h_j|\omega_j^{-1})^r\omega_j^r)^{\frac{1}{r}}
        &\leqslant\frac{s^{1/r}}{\theta^2s}\|h_{S_{l-1}}\|_{\omega,1}\\
        &\leqslant \frac{1}{\theta^2}s^{1/r-1}\|h_{S_{l-1}}\|_{\omega,1},
    \end{align*}
  \begin{align*}
       \|h_{S_{01}^c}\|_r&\leqslant \sum_{l\geqslant 2}\|h_{S_{l}}\|_r\\
       &\leqslant\frac{1}{\theta^2}s^{1/r-1}(\|h_{S_1}\|_{\omega,1}+\|h_{S_2}\|_{\omega,1}+\ldots)\\
       &\leqslant\frac{1}{\theta^2}s^{1/r-1}\|h_{S^c}\|_{\omega,1}.
  \end{align*}

Then we have
\begin{align*}
    \|h\|_r&\leqslant \|h_{S_{01}}\|_r+\|h_{S_{01}^c}\|_r\\
    &\leqslant \|h_{S}\|_r+\|h_{S_1}\|_r+\frac{1}{\theta^2}s^{1/r-1}\|h_{S^c}\|_{\omega,1}\\
    &\leqslant \frac{\rho}{s^{1-1/r}}\|h_{S^c}\|_{\omega,1}+\frac{\rho}{s^{1-1/r}}\|h_{S_1^c}\|_{\omega,1}\\
    &\quad+2\gamma s^{1/r-1/q}\|\Phi h\|_p+\frac{1}{\theta^2}s^{1/r-1}\|h_{S^c}\|_{\omega,1}\\
    &\leqslant \frac{\rho}{s^{1-1/r}}\|h_{S^c}\|_{\omega,1}+\frac{\rho}{s^{1-1/r}}\|h\|_{\omega,1}\\
    &\quad+2\gamma s^{1/r-1/q}\|\Phi h\|_p+\frac{1}{\theta^2}s^{1/r-1}\|h_{S^c}\|_{\omega,1}\\
    &\leqslant (\rho+\frac{1}{\theta^2})s^{1/r-1}\|h_{S^c}\|_{\omega,1}\\
    &\quad+\rho s^{1/r-1/q}[(\rho+1)\|h_{S^c}\|_{\omega,1}\\
    &\quad+\gamma s^{1-1/q}\|\Phi h\|_p]+2\gamma s^{1/r-1/q}\|\Phi h\|_p\\
    &\leqslant\frac{\rho^2+2\rho+\frac{1}{\theta^2}}{\rho-1}s^{1/r-1}(2\sigma_{s}(x)_{\omega,1}+\gamma s^{1-\frac{1}{q}}\|\Phi h\|_p)\\
    &\quad+(\rho+2)\gamma s^{1/r-1/q}\|\Phi h\|_p\\
    &\leqslant\frac{\rho^2+2\rho+\frac{1}{\theta^2}}{\rho-1}2s^{1/r-1}\sigma_{s}(x)_{\omega,1}\\
    &\quad+\frac{2\rho^2+3\rho+\frac{1}{\theta^2}-2}{\rho-1}\gamma s^{1/r-1/q}\|\Phi h\|_p.
    \end{align*}
While $\|\Phi h\|_p\leqslant \|\Phi x-y\|_p+\|\Phi\hat{x}-y\|_p\leqslant2\epsilon$, we have
$$
    \|h\|_r\leqslant As^{1/r-1/q}\epsilon+Bs^{1/r-1}\sigma_{s}(x)_{\omega,1},
$$
where $A$ and $B$ depend on $\rho$, $\gamma$ and $\theta$.

Finally, let $r=q=\infty$.
 \begin{align*}
        \|h_{S}\|_{\omega,1}&=\sum_{j\in S}|h_{j}|\omega_{j}\leqslant s\|h_{S}\|_{\infty}\\
        &\leqslant s(\frac{\rho}{s}\|h_{S^c}\|_{\omega,1}+\gamma \|\Phi h\|_p)\\
        &\leqslant\rho\|h_{S^c}\|_{\omega,1}+\gamma s\|\Phi h\|_p.
    \end{align*}

     So we have
    \begin{align*}
        \|h_{S^{c}}\|_{\omega,1}&\leqslant 2\sigma_{s}(x)_{\omega,1}+\|h_{S}\|_{\omega,1}\\
        &\leqslant2\sigma_{s}(x)_{\omega,1}+\rho\|h_{S^c}\|_{\omega,1}+\gamma s\|\Phi h\|_p\\
        &\leqslant\frac{1}{1-\rho}(2\sigma_{s}(x)_{\omega,1}+\gamma s\|\Phi h\|_p).
    \end{align*}

For arbitrary $k\in S_l$, assume the parameters $\alpha_k=(\sum_{j\in S_l}\omega_j^2)^{-1}\omega_k^2\leqslant(\theta^2s)^{-1}\omega_k^2$, obviously that $\sum_{k\in S_l}\alpha_{k}=1$.\\
$$
      \forall j\in S_l,\,l\geqslant 2,\,|h_j|\omega_j^{-1}\leqslant \sum_{k\in S_{l-1}}\alpha_k|h_k|\omega_k^{-1}\leqslant (\theta^2s)^{-1}\sum_{k\in S_{l-1}}|h_k|\omega_k.
    $$

   While $\omega_j\leqslant1$, for any $l\geqslant 2$, 
    $$
      \|h_{S_l}\|_\infty=\max_{j\in S_l}(|h_j|\omega_j^{-1})\omega_j
      \leqslant\max_{j\in S_l}(|h_j|\omega_j^{-1})
      \leqslant\frac{1}{\theta^2s}\|h_{S_{l-1}}\|_{\omega,1},
    $$

\begin{align*}
    \|h_{S_{01}^c}\|_\infty&\leqslant \sum_{l\geqslant 2}\|h_{S_{l}}\|_\infty\\
   &\leqslant\frac{1}{\theta^2s}(\|h_{S_1}\|_{\omega,1}+\|h_{S_2}\|_{\omega,1}+\ldots)\\
    &\leqslant\frac{1}{\theta^2s}\|h_{S^c}\|_{\omega,1}.
\end{align*}

Combine the inequalities above, it implies that
\begin{align*}
    \|h\|_\infty&\leqslant \|h_{S_{01}}\|_\infty+\|h_{S_{01}^c}\|_\infty\\
    &\leqslant \|h_{S}\|_\infty+\|h_{S_1}\|_\infty+\frac{1}{\theta^2s}\|h_{S^c}\|_{\omega,1}\\
    &\leqslant \frac{\rho}{s}\|h_{S^c}\|_{\omega,1}+\frac{\rho}{s}\|h_{S_1^c}\|_{\omega,1}+2\gamma\|\Phi h\|_p+\frac{1}{\theta^2s}\|h_{S^c}\|_{\omega,1}\\
    &\leqslant \frac{\rho}{s}\|h_{S^c}\|_{\omega,1}+\frac{\rho}{s}\|h\|_{\omega,1}+2\gamma\|\Phi h\|_p+\frac{1}{\theta^2s}\|h_{S^c}\|_{\omega,1}\\
    &\leqslant (\rho+\frac{1}{\theta^2})s^{-1}\|h_{S^c}\|_{\omega,1}\\
    &\quad+\frac{\rho}{s} [(\rho+1)\|h_{S^c}\|_{\omega,1}+\gamma s\|\Phi h\|_p]+2\gamma \|\Phi h\|_p\\
    &\leqslant\frac{\rho^2+2\rho+\frac{1}{\theta^2}}{\rho-1}s^{-1}(2\sigma_{s}(x)_{\omega,1}+\gamma \|\Phi h\|_p)\\
    &\quad+(\rho+2)\gamma s\|\Phi h\|_p\\
    &\leqslant\frac{\rho^2+2\rho+\frac{1}{\theta^2}}{\rho-1}2\frac{\sigma_{s}(x)_{\omega,1}}{s}+\frac{2\rho^2+3\rho+\frac{1}{\theta^2}-2}{\rho-1}\gamma \|\Phi h\|_p\\
    &\leqslant A\epsilon+B\frac{\sigma_{s}(x)_{\omega,1}}{s},
\end{align*}
where $A$ and $B$ depend on $\rho$, $\gamma$ and $\theta$. 
\end{proof}
\subsection{With Gaussian matrices}
To maintain accessibility in our discussion, it's worth noting that Standard Gaussian Random matrices are among the most commonly chosen types of sensing matrices in practical applications. In this section, Standard Gaussian Random matrices are exhibited to satisfy RIP$_{p,q}$ for $q=2$, and $\omega$-RNSP$_{p,q}$ with high probability when the number of observations $m$ is sufficiently large.

A Standard Gaussian Random matrix $\Phi\in \mathbb{R}^{m \times N}$ can be described as follows:
\[ \Phi = \begin{bmatrix}
    \Phi_{11} & \Phi_{12} & \cdots & \Phi_{1N} \\
    \Phi_{21} & \Phi_{22} & \cdots & \Phi_{2N} \\
    \vdots & \vdots & \ddots & \vdots \\
    \Phi_{m1} & \Phi_{m2} & \cdots & \Phi_{mN}
\end{bmatrix}_{m\times N} \]

Where each \( \Phi_{ij}\) is sampled independently from the standard normal distribution, i.e. \(\Phi_{ij}\sim \mathcal{N}(0,1)\).

In \cite{jac2011}, the standard Gaussian random matrix $\Phi \in \mathbb{R}^{m \times N}$ is considered to satisfied RIP$_{p,2}$  of order $s$ and RIP constant $\delta$ with a probability exceeding $1-\eta$ when the number of rows $m$ satisfies certain conditions, specifically, $m \geqslant(p-1) 2^{p+1}$ for $2 \leqslant p<\infty$ and $m \geqslant 0$ for $p=\infty$, there exists a positive constant $c$ such that $m\geqslant c \delta^{-2}\left(s \log \left[e \frac{N}{s}\left(1+12 \delta^{-1}\right)\right]+\log \frac{2}{\eta}\right)^{p/2}$ for $2\leqslant p<\infty$ and $m\gtrsim \exp{c \delta^{-2}\left(s \log \left[e \frac{N}{s}\left(1+12 \delta^{-1}\right)\right]+\log \frac{2}{\eta}\right)}$ for $p=\infty$.

\begin{proposition}
     \label{nspgaussian}
     Let $\Phi$ be an $m \times N$ standard Gaussian matrix. For the given $\omega\in[\theta,1]^N$, $\theta\in(0,1)$, $p\in[1,\infty]$, $q\in[2,\infty)$ and $0<\eta<1$. There exists a constant $c>0$ such that
$$
m \geqslant c( s^{2-2 / q} \log (e N / s)+\log \left(\eta^{-1}\right) ).
$$

Then, with probability exceeding $1-\eta$, $\Phi$ satisfies $\omega$-RNSP$_{p,q}$ of order $s$ with parameters $\rho$ and $\tau / m^{1 / p}$ for some $0<\rho<1$ and $\tau>0$.
\end{proposition}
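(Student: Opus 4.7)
The plan is to establish $\omega$-RNSP$_{p,q}$ for a standard Gaussian matrix via the Gaussian-width / Mendelson small-ball approach used in \cite{dirk2018} for the unweighted case, adapting each step to account for the weights $\omega_i\in[\theta,1]$. First I would reformulate the RNSP inequality as a uniform lower bound. For any fixed $S$ with $|S|\leq s$, homogeneity lets us normalize $\|v_S\|_q=1$, and the required inequality becomes $\gamma\|\Phi v\|_p + \rho s^{1/q-1}\|v_{S^c}\|_{\omega,1}\geq 1$, which holds trivially once $\|v_{S^c}\|_{\omega,1}\geq \rho^{-1}s^{1-1/q}$. Thus it suffices to lower-bound $\|\Phi v\|_p$ uniformly over the ``hard set''
$$T_S=\bigl\{v:\|v_S\|_q=1,\ \|v_{S^c}\|_{\omega,1}\leq\rho^{-1}s^{1-1/q}\bigr\},$$
and then take a union bound over the $\binom{N}{s}$ choices of $S$ (which contributes the $s\log(eN/s)$ term in the measurement count).

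Second, I would introduce the weighted convex body $K_\omega=\operatorname{conv}\bigl(\bigcup_{|S|\leq s}T_S\bigr)$, which by H\"older's inequality (as used in \eqref{w2q}) is contained in $\{v:\|v\|_{\omega,1}\leq(1+1/\rho)s^{1-1/q}\}$. In the unweighted case $\omega\equiv 1$, $K_\omega$ is comparable with the unit ball of an explicit norm (the intersection of a scaled $\ell_1$-ball and an $\ell_q$-ball, in the spirit of Kashin--Temlyakov), which immediately yields sharp Gaussian-width bounds; this norm identification is precisely the step the authors flag as missing in the weighted setting. I would bypass it by estimating the Gaussian width $w(K_\omega)$ directly, via Maurey's empirical method applied to a peeling decomposition of $K_\omega$ indexed by the effective support size, using $\theta\leq\omega_i\leq 1$ to compare $\ell_{\omega,1}$ with $\ell_1$ up to a factor $\theta^{-1}$. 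The target estimate is $w(K_\omega)\lesssim C(\theta,\rho)\,s^{1-1/q}\sqrt{\log(eN/s)}$, matching the unweighted bound up to $\theta$-dependent constants.

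Third, I would apply a lower tail bound of Mendelson small-ball type for $\inf_{v\in K_\omega}\|\Phi v\|_p$ --- Gordon's escape-through-a-mesh theorem when $p=2$, and the generic-chaining $\ell_p$-moment estimates of \cite{dirk2018} for general $p\in[1,\infty]$ --- yielding $\inf_{v\in K_\omega}\|\Phi v\|_p\geq c_1 m^{1/p}-c_2\,w(K_\omega)$ with failure probability at most $\exp(-c_3 m)$. Combined with the width estimate from Step 2, this forces $m\gtrsim s^{2-2/q}\log(eN/s)+\log\eta^{-1}$, which is the announced bound, and the $\omega$-RNSP$_{p,q}$ constants come out as $\rho$ and $\tau/m^{1/p}$ after rescaling. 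The hard part is clearly Step 2: without a norm-ball description of $K_\omega$, the peeling argument and the need to keep all $\theta$-dependence explicit will be the most delicate piece. A secondary technicality is the $\ell_p$ concentration for $p\neq 2$, which relies on the higher-moment Gaussian estimates of \cite{dirk2018} but is otherwise routine once the width is under control.
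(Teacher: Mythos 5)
Your overall route---a Mendelson small-ball lower bound combined with a width estimate for a weighted sparse hull---is exactly the strategy the paper takes (following \cite{dirk2018}), but your Step 3 as written contains a genuine error. The body $K_\omega=\operatorname{conv}\bigl(\bigcup_{|S|\leq s}T_S\bigr)$ is centrally symmetric: each $T_S$ contains $-v$ whenever it contains $v$, so $0=\frac{1}{2}\bigl(v+(-v)\bigr)\in K_\omega$, and therefore $\inf_{v\in K_\omega}\|\Phi v\|_p=0$ for every matrix $\Phi$. No escape-through-a-mesh or small-ball theorem can deliver $\inf_{v\in K_\omega}\|\Phi v\|_p\geq c_1m^{1/p}-c_2\,w(K_\omega)$: Gordon's theorem applies to subsets of the sphere, and the Mendelson-type bound (the paper's Lemma \ref{lemma31}) to an arbitrary, typically non-convex, class. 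Convexification may enter only through the complexity term, because the supremum of a linear functional over a convex hull equals the supremum over the generating set; the infimum itself must be kept over the original non-convex set. This is precisely how the paper proceeds: the infimum in \eqref{pb} is taken over $T^q_{\rho,s}\cap S_{\ell_q^n}$, and the inclusion $T^q_{\rho,s}\cap B_{\ell_q^n}\subset(2+\frac{1}{\rho\theta})D^q_s$ of Lemma \ref{td} (which is the rigorous form of your ``peeling'' Step 2, built from the nonincreasing rearrangement of $x\cdot\omega$ in blocks of size $s$) is used only inside $R_m(\mathcal{F})$, where it reduces the width computation to $\Sigma_s^q$ and then, via $\|x\|_2\leq s^{1/2-1/q}\|x\|_q$, to the known Gaussian width of $\Sigma_s^2$, yielding $s^{1-1/q}\sqrt{\log(eN/s)}$ and hence the stated $s^{2-2/q}\log(eN/s)$ sample complexity.

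Two smaller points. First, your Step 1 (fix $S$, normalize $\|v_S\|_q=1$, union bound over the $\binom{N}{s}$ supports) and your Steps 2--3 (one hull over all supports) are two different, mutually exclusive ways of handling the supremum over $S$, and you never reconcile them; the paper needs neither a union bound nor per-support normalization, since the cone $T^q_{\rho,s}$ carries the existential quantifier over $S$ and is intersected with the full $\ell_q$-sphere, so all supports are treated in a single application of Lemma \ref{lemma31}. Second, you do not say how the small-ball probability $Q_{\mathcal{F}}$ is controlled (in the paper, $\|x\|_q\leq\|x\|_2$ reduces it to a scalar standard Gaussian), nor how $p=\infty$ is reached, which the paper handles through $\|Ax\|_{\log m}\leq e\|Ax\|_{\infty}$ and reduction to $p=\log m$. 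With Step 3 repaired so that the hull appears only in the Rademacher-width term, the rest of your outline matches the paper's proof.
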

For the unweighted case, there is a theorem in \cite{dirk2018} that the standard Gaussian matrix will satisfy $\ell_q$-RNSP, where the $\ell_{\omega,1}$ norm is substituted for $\ell_1$ within the context of $\omega$-RNSP$_{p,q}$, with high probability for sufficiently large $m$.
With Lemma \ref{td}, the proof of Proposition \ref{nspgaussian} is available for Theorem III.3 in \cite{dirk2018}.

To analyze $\omega$-RNSP$_{p,q}$, define the cone 
$$
\begin{aligned}
& T_{\rho, s}^{q}= \\
& \left\{x \in \mathbb{C}^{n}: \exists S \subset[n],|S|=s:\left\|x_{S}\right\|_{q} \geqslant \frac{\rho}{s^{1-1 / q}}\left\|x_{S^{c}}\right\|_{\omega,1}\right\}.
\end{aligned}
$$

To establish Proposition \ref{nspgaussian}, we require the following lemmas.
\begin{lemma}[see Lemma 3.1 in \cite{dirk2018}]
\label{lemma31}
    Fix $1 \leq p<\infty$. Let $\mathcal{F}$ be a class of functions from $\mathbb{C}^n$ into $\mathbb{C}$. Consider
$$
Q_{\mathcal{F}}(u)=\inf _{f \in \mathcal{F}} \mathbb{P}(|f(X)| \geq u)
$$
and
$$
R_m(\mathcal{F})=\mathbb{E} \sup _{f \in \mathcal{F}}\left|\frac{1}{m} \sum_{i=1}^m \varepsilon_i f\left(X_i\right)\right|,
$$
where $\left(\varepsilon_i\right)_{i \geq 1}$ is a Rademacher sequence. Let $u>0$ and $t>0$, then, with probability at least $1-2 e^{-2 t^2}$,
$$
\inf _{f \in \mathcal{F}} \frac{1}{m} \sum_{i=1}^m\left|f\left(X_i\right)\right|^p \geq u^p\left(Q_{\mathcal{F}}(2 u)-\frac{4}{u} R_m(\mathcal{F})-\frac{t}{\sqrt{m}}\right).
$$
\end{lemma}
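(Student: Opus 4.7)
The plan is to apply Mendelson's small-ball method: lower-bound the power $|f(X_i)|^p$ by a Lipschitz truncation of an indicator, then control the resulting empirical process by concentration together with symmetrization and contraction.

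First I would fix a $(1/u)$-Lipschitz function $\phi\colon[0,\infty)\to[0,1]$ that equals $0$ on $[0,u]$ and $1$ on $[2u,\infty)$ (e.g.\ the piecewise-linear ramp). Since $|f(X_i)|^p\geq u^p\,\mathbf{1}_{\{|f(X_i)|\geq u\}}\geq u^p\,\phi(|f(X_i)|)$, taking an infimum over $\mathcal{F}$ gives
\[
\inf_{f\in\mathcal{F}}\frac{1}{m}\sum_{i=1}^m|f(X_i)|^p\;\geq\;u^p\Bigl(\inf_{f\in\mathcal{F}}\mathbb{E}\phi(|f(X)|)\;-\;Z\Bigr),
\]
where $Z:=\sup_{f\in\mathcal{F}}\bigl|\tfrac{1}{m}\sum_{i=1}^m(\phi(|f(X_i)|)-\mathbb{E}\phi(|f(X)|))\bigr|$. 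The pointwise inequality $\phi(x)\geq\mathbf{1}_{\{x\geq 2u\}}$ forces $\mathbb{E}\phi(|f(X)|)\geq\mathbb{P}(|f(X)|\geq 2u)\geq Q_{\mathcal{F}}(2u)$, which is exactly the small-ball term appearing on the right-hand side of the statement.

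Next I would treat the empirical-process remainder $Z$. Because $\phi\in[0,1]$, replacing a single coordinate $X_i$ alters $Z$ by at most $1/m$, so the bounded-differences (McDiarmid) inequality yields $Z\leq \mathbb{E}Z+t/\sqrt{m}$ with probability at least $1-2e^{-2t^2}$ (using the absolute-deviation form to account for the factor~$2$). For $\mathbb{E}Z$ itself, standard symmetrization inserts a Rademacher sequence and contributes a factor $2$, and then Talagrand's contraction principle, applied to the $(1/u)$-Lipschitz map $z\mapsto \phi(|z|)$, which vanishes at $0$, strips off $\phi(|\cdot|)$ at the cost of another factor $2$ and a factor $1/u$. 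Combining these gives $\mathbb{E}Z\leq (4/u)R_m(\mathcal{F})$, and substituting into the first display yields the announced bound.

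The main obstacle is the contraction step in the complex-valued setting: the Ledoux--Talagrand contraction principle is stated for real-valued Lipschitz functions composed with Rademacher-weighted sums, so I would write $f(X_i)=\operatorname{Re}f(X_i)+i\operatorname{Im}f(X_i)$, apply contraction to each part separately (using that $|z|$ is $1$-Lipschitz on $\mathbb{C}$), and then recombine, being careful that the final absolute constant in front of $R_m(\mathcal{F})$ does not exceed $4/u$. The remaining steps---symmetrization, McDiarmid, and assembling the three terms---are routine bookkeeping.
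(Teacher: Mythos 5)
Your proposal is correct and follows exactly the route by which this lemma is established in the cited source: it is Mendelson's small-ball method --- a $(1/u)$-Lipschitz soft indicator sandwiched between $\mathbf{1}_{\{x\geq 2u\}}$ and $\mathbf{1}_{\{x\geq u\}}$, McDiarmid's bounded-differences inequality for the deviation term $t/\sqrt{m}$, and symmetrization plus Talagrand contraction producing the factor $4/u$ in front of $R_m(\mathcal{F})$ --- which is precisely the proof of Lemma 3.1 in \cite{dirk2018}; the paper itself states the lemma without proof, deferring to that reference. The complex-valued contraction subtlety you flag is the only delicate point, and your plan to handle it (treating $z\mapsto\phi(|z|)$ via real and imaginary parts while tracking the absolute constant) is the standard resolution.
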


\begin{lemma}
    \label{td}
     For the given $\omega\in[\theta,1]^N$, $\theta\in(0,1)$, $1 \leqslant q<\infty$. Set

$$
\Sigma_{s}^{q}:=\left\{x \in \mathbb{C}^{n}:\|x\|_{0} \leqslant s,\|x\|_{q}=1\right\},
$$

and let $D_{s}^{q}$ be its convex hull.

Then for the unit ball $B_{\ell_{D_s^q}}$ with respect to $\|\cdot\|_{D_{s}^{q}}$, $B_{\ell_{D_s^q}}\subset D_s^q$.
As a consequence,

$$
T_{\rho, s}^{q} \cap B_{\ell_{q}^{n}} \subset\left(2+\frac{1}{\rho\theta}\right) D_{s}^{q}.
$$

\end{lemma}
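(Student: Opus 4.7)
The plan is to prove the two inclusions separately. For the first claim, $B_{\ell_{D_s^q}}\subset D_s^q$, I would invoke a standard fact of convex geometry: $\Sigma_s^q$ is a finite union of compact unit $\ell_q$ spheres (one per support $S$ of size $s$), hence compact, so its convex hull $D_s^q$ is compact in $\mathbb{C}^n$ by Carath\'eodory. Since $D_s^q$ is also convex, balanced, and absorbing (it contains $\pm e_i$ and hence the $\ell_1$ unit ball), its Minkowski functional $\|\cdot\|_{D_s^q}$ is a norm, and the associated closed unit ball coincides with $D_s^q$ itself, yielding the claimed inclusion (in fact equality).

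For the second inclusion, I would use the block-decomposition strategy already exercised in the proofs of Theorems \ref{thmripp2} and \ref{thmrnsp}. Given $x \in T_{\rho,s}^q \cap B_{\ell_q^n}$, pick an $S$ of size $s$ witnessing the cone condition, so $\|x_S\|_q \geq \frac{\rho}{s^{1-1/q}}\|x_{S^c}\|_{\omega,1}$. Together with $\|x\|_q \leq 1$, this immediately gives $\|x_{S^c}\|_{\omega,1} \leq \frac{s^{1-1/q}}{\rho}$. Then sort the entries of $x_{S^c}$ in nonincreasing order of $|x_j|\omega_j^{-1}$ and partition $S^c$ into blocks $S_1, S_2, \dots$ each of size at most $s$, so that $x = x_S + x_{S_1} + \sum_{l\geq 2} x_{S_l}$.

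The packaging now proceeds in two pieces. Since $x_S$ and $x_{S_1}$ are $s$-sparse with $\ell_q$-norm at most $\|x\|_q \leq 1$, each lies in $D_s^q$ (write it as a convex combination of a point of $\Sigma_s^q$ and the origin, where the origin belongs to $D_s^q$ via $\frac{1}{2}e_1+\frac{1}{2}(-e_1)$), so their sum lies in $2 D_s^q$. For the tail blocks, the sortedness together with $\omega_j \in [\theta,1]$ yields a per-block decay estimate of the form $\|x_{S_l}\|_q \leq C(\theta)\, s^{1/q-1}\|x_{S_{l-1}}\|_{\omega,1}$; telescoping and using $\|x_{S^c}\|_{\omega,1} \leq \frac{s^{1-1/q}}{\rho}$ gives $\sum_{l\geq 2}\|x_{S_l}\|_q \leq \frac{C(\theta)}{\rho}$. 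Writing each $x_{S_l}$ as $\|x_{S_l}\|_q$ times an element of $\Sigma_s^q$ and convex-combining shows $\sum_{l\geq 2} x_{S_l} \in \frac{C(\theta)}{\rho} D_s^q$, and adding the two contributions gives $x \in \bigl(2 + \frac{C(\theta)}{\rho}\bigr) D_s^q$ with the stated constant.

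The main obstacle is pinning down the correct power of $\theta$: the $\alpha_k \propto \omega_k^2$ averaging used in the theorems above delivers $\theta^{-2}$, whereas the statement calls for $\theta^{-1}$. Obtaining the sharper exponent likely requires a more direct use of the sortedness $|x_j|\omega_j^{-1}\leq |x_k|\omega_k^{-1}$, multiplied by $\omega_k$ rather than by $\omega_k^2$ before averaging, so that only one factor of $\theta$ is lost when converting to the $\omega$-weighted $\ell_1$ norm of $S_{l-1}$. Everything else in the argument is routine convex-analytic bookkeeping.
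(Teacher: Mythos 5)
Your overall architecture (block decomposition of the tail, convex recombination into $D_s^q$, telescoping against the cone condition, head terms $x_S, x_{S_1}$ each placed in $D_s^q$ via the origin trick) is essentially the paper's, but there is a genuine gap exactly where you flagged uncertainty: the factor $\theta^{-1}$ versus $\theta^{-2}$, and your proposed repair does not close it. The obstruction is not the choice of averaging weights $\alpha_k$ but the sorting criterion itself. If you order the tail by $|x_j|\omega_j^{-1}$, as in the proofs of Theorems \ref{thmripp2} and \ref{thmrnsp}, then no averaging scheme can beat $\theta^{-2}$: take $\omega_j=1$, $|x_j|=t$ for some $j\in S_l$, and $\omega_k=\theta$, $|x_k|=\theta t$ for all $k\in S_{l-1}$. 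This configuration is consistent with that ordering (all ratios equal $t$), yet $\frac{1}{s}\|x_{S_{l-1}}\|_{\omega,1}=\theta^2 t$ while $|x_j|=t$, so any bound of the form $|x_j|\leq \frac{C}{s}\|x_{S_{l-1}}\|_{\omega,1}$ forces $C\geq\theta^{-2}$, regardless of whether you average with $\omega_k$, $\omega_k^2$, or anything else. Moreover, padding $S_l$ with zeros and taking, say, $\theta=0.1$, $s=100$, $q=4$ makes even the telescoped global bound $\sum_{l\geq2}\|x_{S_l}\|_q\leq\frac{s^{1/q-1}}{\theta}\|x_{S^c}\|_{\omega,1}$ fail under this ordering, so the loss is not a per-block bookkeeping artifact that a cleverer summation could absorb.

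The paper's fix is to change the ordering: the rearrangement $x^*$ is nonincreasing in the \emph{product} $|x_i|\omega_i$, not the quotient. Then for $i$ in a block $I_l$, $l\geq3$, sortedness gives $|x_i^*|\omega_i\leq\frac{1}{s}\sum_{j\in I_{l-1}}|x_j^*|\omega_j$ directly in terms of the weighted $\ell_1$ norm of the previous block, and only a single factor $\omega_i^{-1}\leq\theta^{-1}$ is paid when stripping the weight on the left; this yields $\|x^*_{I_l}\|_q\leq\frac{s^{1/q-1}}{\theta}\|x^*_{I_{l-1}}\|_{\omega,1}$ and hence the constant $2+\frac{1}{\rho\theta}$. (This ordering also makes $I_1$ the set of $s$ largest entries of $|x|\cdot\omega$, which is what justifies $\|x^*_{I_1^c}\|_{\omega,1}\leq\|x_{S^c}\|_{\omega,1}$ for every $S$ with $|S|=s$; note the counterexample above cannot occur under it, since there $|x_j|\omega_j=t>\theta^2 t=|x_k|\omega_k$ would force $j$ to be sorted \emph{before} $S_{l-1}$.) A second, smaller discrepancy: you read $\|\cdot\|_{D_s^q}$ as the Minkowski gauge of $D_s^q$, which makes the first claim tautological; in the paper it is the explicit block norm $\|x\|_{D_s^q}=\sum_\ell\|x^*_{I_\ell}\|_q$ built from this weighted rearrangement, the first inclusion is the (easy but nonvacuous) statement that its unit ball sits inside $D_s^q$, and the second inclusion is then deduced by bounding $\|x\|_{D_s^q}\leq 2+\frac{1}{\rho\theta}$ for $x$ in the cone and rescaling, rather than by decomposing $x$ directly.
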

Based on Lemma III.2 in \cite{dirk2018}, we make some modifications to the form of $T_{\rho, s}^{q}$.

\begin{proof}
Define 

$$
\|x\|_{D_{s}^{q}}:=\sum_{\ell=1}^{\lceil n / s\rceil}\left(\sum_{i \in I_{\ell}} x_{i}^{* q}\right)^{1 / q},
$$

where $I_{1}, \ldots, I_{\lceil n / s\rceil}$ form a uniform partition of $[n]$, i.e.,

$I_{\ell}= \begin{cases}\{s(\ell-1)+1, \ldots, s \ell\}, & \ell=1, \ldots,\lceil n / s\rceil-1, \\ \{s(\lceil n / s\rceil-1)+1, \ldots, n\}, & \ell=\lceil n / s\rceil,\end{cases}$

and $x^{*}$ is the nonincreasing rearrangement of $x$ based the value of $x\cdot \omega$.

 Suppose that $\|x\|_{D_{s}^{q}} \leqslant 1$. We make a nonincreasing rearrangement $x^*$ of the entries in the set $[N]$ based on the values of $x\cdot\omega$,  and partition $[N]$ as $S_1\cup S_2\cup\ldots$, with $|S_{i}|=s$. So that $|x_j^*|\omega_j\leqslant|x_k^*|\omega_k$ for all $j\in S_l$, $k\in S_{l-1}$, $l\geqslant2$. Let $\alpha_i=\|x_{S_i}\|_q$, $i\geqslant1$.

Then $x$ can be represented as

$$
x=\sum_{i} \alpha_{i}\left(\alpha_{i}^{-1} x_{S_{i}}\right)
$$

where

$$
\sum_{i} \alpha_{i}=\sum_{i}\left\|x_{S_{i}}\right\|_{q}=\|x\|_{D_{s}^{q}} \leqslant 1.
$$

And $\left\|\alpha_{i}^{-1} x_{S_{i}}\right\|_{q}=\frac{\left\|x_{S_{i}}\right\|_{q}}{\left\| x_{S_{i}}\right\|_{q}}=1$, $\left\|\alpha_{i}^{-1} x_{S_{i}}\right\|_{0} \leqslant s$, so $x \in D_{s}^{q}$, and $B_{\ell_{D_s^q}}\subset D_s^q$.

Assume that the vector $x\in T_{\rho, s}^{q} \cap B_{\ell_{q}^{n}}$, then it can be regarded as 
$$
    \|x\|_{D_{s}^{q}}=\left(\sum_{i \in I_{1}} x_{i}^{* q}\right)^{1 / q}+\left(\sum_{i \in I_{2}} x_{i}^{* q}\right)^{1 / q}+\sum_{\ell \geqslant 3}\left(\sum_{i \in I_{\ell}} x_{i}^{* q}\right)^{1 / q}.
$$

For $i\in I_l$, $l\geqslant3$, $|x_i^*|\omega_i\leqslant\frac{1}{s}\sum_{j\in I_{l-1}}|x_j^*|\omega_j$.

So
\begin{align*}
    \sum_{l\geqslant3}(\sum_{i\in I_l} (x^*_i)^q)^{1/q}&\leqslant\sum_{l\geqslant3}(\sum_{i\in I_l}(\frac{1}{s}\sum_{j\in I_{l-1}}|x_j^*|\omega_j)^q(\omega_i^{-1})^q)^{1/q};\\
    &\leqslant \sum_{l\geqslant3}\frac{s^{1/q-1}}{\theta}\sum_{j\in I_{l-1}}|x_j^*|\omega_j;\\
    &\leqslant \frac{s^{1/q-1}}{\theta}\sum_{l\geqslant2}\sum_{j\in I_{l}}|x_j^*|\omega_j;\\
    &\leqslant \frac{s^{1/q-1}}{\theta}\|x_{S^c}\|_{\omega,1}.
\end{align*}

The last inequality holds since $I_1$ contains the largest $s$ entries of $x\cdot\omega$, $\|x^*_{I_1^c}\|_{\omega,1}\leqslant\|x_{S^c}\|_{\omega,1}$ for any set $S \subset[n]$ with $|S|=s$.

Since $x\in T_{\rho, s}^{q} \cap B_{\ell_{q}^{n}}$, there exist a set $S \subset[n]$ with $|S|=s$, such that $\left\|x_{S}\right\|_{q} \geqslant \frac{\rho}{s^{1-1 / q}}\left\|x_{S^{c}}\right\|_{\omega,1}$.

So
$$
\sum_{l\geqslant3}(\sum_{i\in I_l} (x^*_i)^q)^{1/q}\leqslant\frac{1}{\rho\theta}\left\|x_{S}\right\|_{q}\leqslant\frac{1}{\rho\theta}(\sum_{i\in I_1} (x^*_i)^q)^{1/q}.
$$

Since $\|x\|_q\leqslant1$, $\|x\|_{D_{s}^{q}} \leqslant 2+\frac{1}{\rho\theta}$.
\end{proof}

\begin{proof}[Proof of Proposition \ref{nspgaussian}]
  To prove the proposition, a sufficient condition of  $\omega$-RNSP$_{p,q}$ is introduced. We need to exhibit that if
\begin{equation}
\label{pb}
\mathbb{P}\left(\inf _{x \in T_{\rho, s}^{q} \cap S_{\ell_{q}^{n}}}\|\Phi x\|_{p} \geqslant \frac{m^{1 / p}}{\tau}\right) \geqslant 1-\eta 
\end{equation}
holds, where $S_{\ell_{q}^{n}}$ is the sphere of $\ell_q$ norm, then with probability exceeding $1-\eta$, $\Phi$ satisfies $\omega$-RNSP$_{p,q}$.\\

Suppose \eqref{pb} holds, for any $x \in \mathbb{C}^{n}$, $S \subset[N]$ with $|S| \leqslant s$, there are two cases.

If $\|A x\|_{p}<\left(m^{1 / p} / \tau\right)\|x\|_{q}$, then $x /\|x\|_{q}$ is not in $T_{\rho, s}^{q}$,

$$
\left\|x_{S}\right\|_{q} \leqslant \frac{\rho}{s^{1-1 / q}}\left\|x_{S^{c}}\right\|_{\omega,1} \leqslant \frac{\rho}{s^{1-1 / q}}\left\|x_{S^{c}}\right\|_{\omega,1}+\frac{\tau}{m^{1 / p}}\|A x\|_{p} .
$$

If $\|A x\|_{p} \geqslant$ $\left(m^{1 / p} / \tau\right)\|x\|_{q}$, then 

$$
\left\|x_{S}\right\|_{q} \leqslant\|x\|_{q} \leqslant \frac{\rho}{s^{1-1 / q}}\left\|x_{S^{c}}\right\|_{\omega,1}+\frac{\tau}{m^{1 / p}}\|A x\|_{p} .
$$

To prove \eqref{pb}, we write

$$
\inf _{x \in T_{\rho, s}^{q} \cap S_{\ell_{q}^{n}}} \frac{\|A x\|_{p}}{m^{1 / p}}=\inf _{x \in T_{\rho, s}^{q} \cap S_{q}^{n}}\left(\frac{1}{m} \sum_{i=1}^{m}\left|\left\langle X_{i}, x\right\rangle\right|^{p}\right)^{1 / p},
$$

where $X_{i}$ denotes the $i$-th row of $A$. To apply Lemma \ref{lemma31}, we estimate the small ball probability $Q_{\mathcal{F}}$ and the expected Rademacher supremum $R_{m}(\mathcal{F})$ for the set of linear functions

$$
\mathcal{F}=\left\{\langle\cdot, x\rangle: x \in T_{\rho, s}^{q} \cap S_{\ell_{q}^{n}}\right\}
$$

Let $V=m^{-1 / 2} \sum_{i=1}^{m} \varepsilon_{i} X_{i}$, then by Lemma \ref{td},

$$
\begin{aligned}
R_{m}(\mathcal{F}) & =m^{-1 / 2} \mathbb{E} \sup _{x \in T_{\rho, s}^{q} \cap S_{\ell_{q}^{n}}}\langle V, x\rangle \\
& \leq\left(2+\frac{1}{\rho\theta}\right) m^{-1 / 2} \mathbb{E} \sup _{x \in D_{s}^{q}}\langle V, x\rangle \\
& =\left(2+\frac{1}{\rho\theta}\right) m^{-1 / 2} \mathbb{E} \sup _{x \in \Sigma_{s}^{q}}\langle V, x\rangle,
\end{aligned}
$$

as $D_{s}^{q}$ is the convex hull of $\Sigma_{s}^{q}$. Since any $x \in \Sigma_{s}^{q}$ satisfies $\|x\|_{2} \leq s^{1 / 2-1 / q}\|x\|_{q}=s^{1 / 2-1 / q}$,

$$
R_{m}(\mathcal{F}) \leq s^{1 / 2-1 / q}\left(2+\frac{1}{\rho\theta}\right) m^{-1 / 2} \mathbb{E} \sup _{x \in \Sigma_{s}^{2}}\langle V, x\rangle
$$

Since $X_{1}, \ldots, X_{m}$ are independent standard Gaussian vectors, so is $V$. With Lemma 4 in \cite{kaba2015}, $
\mathbb{E} \sup _{x \in \Sigma_{s}^{2}}\langle V, x\rangle
$, the Gaussian width of $\Sigma_{s}^{2}$, is known to be

$$
\mathbb{E} \sup _{x \in \Sigma_{s}^{2}}\langle V, x\rangle \leq \sqrt{2 s \log (e n / s)}+\sqrt{s}
$$

and we can conclude that

$$
R_{m}(\mathcal{F}) \leq c s^{1-1 / q}\left(2+\frac{1}{\rho\theta}\right) m^{-1 / 2} \sqrt{\log (e n / s)} .
$$

To estimate the small ball probability, note that, since $\|x\|_{q} \leq\|x\|_{2}$, for any $x \in S_{\ell_{q}^{n}}$,

$$
\begin{aligned}
\mathbb{P}\left(\left|\left\langle X_{i}, x\right\rangle\right| \geq u\right) & =\mathbb{P}\left(\left|\left\langle X_{i}, \frac{x}{\|x\|_{2}}\right\rangle\right| \geq \frac{u}{\|x\|_{2}}\right) \\
& \geq \mathbb{P}\left(\left|\left\langle X_{i}, \frac{x}{\|x\|_{2}}\right\rangle\right| \geq u\right)=\mathbb{P}(|g| \geq u),
\end{aligned}
$$

where $g$ is a standard Gaussian real-valued random variable. Therefore,

$$
Q_{\mathcal{F}}(2 u) \geq \mathbb{P}(|g| \geq 2 u)
$$

Now pick $u_{*}$ small enough so that the right hand side is bigger than $1 / 2$, say. Pick $m$ large enough so that

$$
\max \left\{\frac{4 c\left(2+\frac{1}{\rho\theta}\right) s^{1-1 / q} \sqrt{\log (e n / s)}}{u_{*} \sqrt{m}}, \frac{\sqrt{\log (2 / \eta)}}{\sqrt{2 m}}\right\} \leq 1 / 8
$$

By Lemma \ref{lemma31}, we can now conclude that \eqref{pb} holds with $\tau=4^{1 / p} / u_{*}$.

Finally, let $p=\infty$. Since $\|A x\|_{\log m} \leq e\|A x\|_{\infty}$,

$$
\begin{aligned}
& \mathbb{P}\left(\inf _{x \in T_{\rho, s}^{q} \cap S_{\ell_{q}^{n}}}\|A x\|_{\infty} \geq \frac{1}{\tau}\right) \\
& \geq \mathbb{P}\left(\inf _{x \in T_{\rho, s}^{q} \cap S_{\ell_{q}^{n}}}\|A x\|_{\log m} \geq \frac{e}{\tau}\right)
\end{aligned}
$$

Thus, in this case the result follows from our proof for $p=$ $\log m$.
\end{proof}

In both of the aforementioned propositions, the specified bound for the number of observations $m$ for RIP$_{p,2}$ is larger than that in Proposition 
\ref{nspgaussian}. This indicates a gap between the two properties. Therefore, the following section will examine the relationship between the two properties.

\begin{theorem}
    \label{rip2nsp}
    For any given weight $\omega=[\omega_1,\ldots,\omega_N]^T\in [0,1]^N$, if the matrix $\Phi$ satisfie$RIP_{p,q}$ of order $2s$ with $\delta_{2s}$,$s\geqslant2$ and $p,q\in[2,+\infty)$, then $\Phi$ satisfies $\omega$-RNSP$_{p,q}$ of order $s$ with constants  $\rho=\frac{C_{p,q}}{(1-\delta_{2s})^{2/q}}$ and $\gamma=\frac{(1+\delta_{2s})^{1/q}}{\mu_{p,q}(1-\delta_{2s})^{2/q}}$, where $C_{p,q}$ is defined in  Lemma \ref{lemma2}.
\end{theorem}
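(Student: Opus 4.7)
The plan is to reuse the machinery developed in the proof of Theorem \ref{thmripp2}, but applied directly to an arbitrary vector $v \in \mathbb{C}^N$ rather than to the residual $h = \hat x - x$ of a reconstruction. The target inequality is
\begin{equation*}
\|v_S\|_q \leqslant \frac{\rho}{s^{1-1/q}} \|v_{S^c}\|_{\omega,1} + \gamma \|\Phi v\|_p,
\end{equation*}
so the strategy is to produce a bound on $\|v_{S_{01}}\|_q$ whose data-fidelity term contains $\|\Phi v\|_p$ in place of the $2\epsilon$ that appeared in Theorem \ref{thmripp2}, then to absorb the tail contributions into $\|v_{S^c}\|_{\omega,1}$.

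First I would fix $v \in \mathbb{C}^N$ and $S \subset [N]$ with $|S| \leqslant s$, set $S_0 := S$, and repeat the weighted nonincreasing rearrangement construction: order the entries of $S^c$ by decreasing $|v_i|\omega_i^{-1}$, partition $S^c$ into blocks $S_1, S_2, \ldots$ of size $s$ (the last possibly smaller), and put $S_{01} = S_0 \cup S_1$. This is exactly the setup used in Theorem \ref{thmripp2}, so the geometric block estimate
\begin{equation*}
\sum_{l \geqslant 2} \|v_{S_l}\|_q \leqslant \frac{1}{\theta^2} s^{1/q - 1} \|v_{S^c}\|_{\omega,1}
\end{equation*}
is available verbatim.

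Next I would import the core RIP-to-$\ell_q$ estimate from the proof of Theorem \ref{thmripp2}: starting from
\begin{equation*}
\|\Phi v_{S_{01}}\|_p^2 = \langle J(\Phi v_{S_{01}}), \Phi v\rangle - \Big\langle J(\Phi v_{S_{01}}),\, \Phi\Big(\sum_{l \geqslant 2} v_{S_l}\Big)\Big\rangle,
\end{equation*}
using H\"older to bound the first inner product by $\|\Phi v_{S_{01}}\|_p \|\Phi v\|_p$ (since $\|J(\cdot)\|_{p^*} = \|\cdot\|_p$), applying Lemma \ref{lemma2} to the tail inner products because $S_{01}$ is disjoint from each $S_l$ with $l \geqslant 2$ and $|S_{01}|, |S_l| \leqslant 2s$, then invoking RIP$_{p,q}$ on the left and dividing through by $\mu_{p,q}^2(1-\delta_{2s})^{2/q}\|v_{S_{01}}\|_q$, I get
\begin{equation*}
\|v_{S_{01}}\|_q \leqslant \frac{(1+\delta_{2s})^{1/q}}{\mu_{p,q}(1-\delta_{2s})^{2/q}} \|\Phi v\|_p + \frac{C_{p,q}}{(1-\delta_{2s})^{2/q}} \sum_{l \geqslant 2} \|v_{S_l}\|_q.
\end{equation*}
Note that the leading factor of $2$ present in Theorem \ref{thmripp2} disappears here because $\|\Phi v\|_p$ replaces the $\|\Phi h\|_p \leqslant 2\epsilon$ bound.

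Finally, since $\|v_S\|_q \leqslant \|v_{S_{01}}\|_q$, combining the two displays above gives
\begin{equation*}
\|v_S\|_q \leqslant \gamma \|\Phi v\|_p + \frac{\rho}{s^{1-1/q}} \|v_{S^c}\|_{\omega,1}
\end{equation*}
with $\gamma$ and $\rho$ exactly as stated (up to the $\theta^{-2}$ factor being absorbed into the definition of $\rho$ or, more precisely, carried as part of the constant, depending on how one tracks the $\theta$ dependence). The main conceptual point is that the proof of Theorem \ref{thmripp2} already contains, as an intermediate inequality, precisely the null space estimate claimed here; the only real work is extracting that sub-argument cleanly from the reconstruction setting and replacing $\hat x - x$ by an arbitrary $v$. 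The mild obstacle is bookkeeping the constants so that the $\theta$ dependence from the weighted tail estimate matches the stated $\rho$ and $\gamma$; the inequality $\rho < 1$ required by Definition \ref{robustnsp} is a condition on $\delta_{2s}$ (via $C_{p,q}$) rather than something to be proved.
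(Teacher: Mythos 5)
Your proposal is correct and follows essentially the same route as the paper's own proof: the paper likewise applies the block decomposition of $S^c$ by decreasing $|\nu_i|\omega_i^{-1}$ to an arbitrary vector, bounds $\|\Phi \nu_{S_{01}}\|_p^2$ via the duality map $J$, H\"older, Lemma \ref{lemma2}, and RIP$_{p,q}$, and then absorbs the tail blocks into $\|\nu_{S^c}\|_{\omega,1}$, arriving at exactly your inequality. Your remark about the $\theta^{-2}$ bookkeeping is well taken --- the paper's proof actually concludes with $\rho=\frac{C_{p,q}}{\theta^{2}(1-\delta_{2s})^{2/q}}$, so the $\theta^{2}$ factor missing from the theorem statement is a discrepancy in the paper itself, not in your argument.
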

\begin{proof}[Proof of Theorem \ref{rip2nsp}]
For any vector $\nu\in\mathbb{C}^N$ and index set $S =: S_0$ with $|S|= s$. 
We make a nonincreasing rearrangement of the entries in set $S^{c}$ based on the values of $\nu_{i}\omega_{i}^{-1}$ with $i\in S^c$, and partition $S_0^c$ as $\{S_l:1\leqslant l\leqslant\lceil\frac{N-s}{s}\rceil \}$ with $|S_l|=s$ for $1\leqslant l\leqslant\lfloor\frac{N-s}{s}\rfloor$ and $|S_l|\leqslant s$ for $\lceil\frac{N-s}{s}\rceil$. So that $|\nu_j|\omega_j^{-1}\leqslant|\nu_k|\omega_k^{-1}$ for all $j\in S_l$, $k\in S_{l-1}$, $l\geqslant2$. We denote that $S_{01}=S_0\cup S_1$.

   According to the definition of the matrix $J$ in Lemma \ref{lemma2},
   $$
    \|\Phi \nu_{S_{01}}\|_p^2=\langle J(\Phi \nu_{S_{01}}),\Phi \nu\rangle-\langle J(\Phi \nu_{S_{01}}),\Phi\left(\sum_{l\geqslant2}\nu_{S_l}\right)\rangle.
    $$
   
     By the H$\ddot{o}$lder inequality and $\omega$-$RIP_{p,q}$ on $S_{01}$,
 \begin{align*}
    |\langle J(\Phi \nu_{S_{01}}),\Phi \nu\rangle|&\leqslant\|J(\Phi \nu_{S_{01}})\|_{\frac{p}{p-1}}\|\Phi \nu\|_p\\
    &=\|\Phi \nu_{S_{01}}\|_p\|\Phi \nu\|_p\\
    &\leqslant \|\Phi \nu\|_p\mu_{p,q}(1+\delta_{2s})^{1/q}\|\nu_{S_{01}}\|_q.
  \end{align*}
 When $l\geqslant2$, $S_{01}$ and $S_l$ are disjoint. By Lemma \ref{lemma2}, if the matrix $\Phi$ satisfies $\omega$-$RIP_{p,q}$ of order $2s$, then
         $$
|\langle J(\Phi \nu_{S_{01}}),\Phi \nu_{S_l}\rangle|\leqslant \mu_{p,q}^2C_{p,q}\|\nu_{S_{01}}\|_q\|\nu_{T_l}\|_q.
    $$
    Combining the equations above, we have
     \begin{align*}
    \mu_{p,q}^2(1-\delta_{2s})^{2/q}\|\nu_{S_{01}}\|_q^{2}&\leqslant\|\Phi \nu_{S_{01}}\|_p^2\\
    &\leqslant \|\Phi \nu\|_p\mu_{p,q}(1+\delta_{2s})^{1/q}\|\nu_{S_{01}}\|_q\\
    &\quad+\mu_{p,2}^2C_{p,q}\|\nu_{S_{01}}\|_q\sum_{j\geqslant2}\|\nu_{S_l}\|_q
  \end{align*}
    Divide both sides by$\|\nu_{S_{01}}\|_q$,
    \begin{equation}
    \label{l2tl2}
        \|\nu_{S_{01}}\|_q\leqslant\frac{(1+\delta_{2s})^{1/q}}{\mu_{p,q}(1-\delta_{2s})^{2/q}}\|\Phi \nu\|_p+\frac{C_{p,q}}{(1-\delta_{2s})^{2/q}}\sum_{l\geqslant2}\|\nu_{S_l}\|_q.
    \end{equation}
     
     For arbitrary $k\in S_l$, assume the parameters $\alpha_k=(\sum_{j\in T_l}\omega_j^2)^{-1}\omega_k^2\leqslant(s-1)^{-1}\omega_k^2$, obviously that $\sum_{k\in T_l}\alpha_{k}=1$.\\
$$
      \forall j\in S_l,\,l\geqslant 2,\,|\nu_j|\omega_j^{-1}\leqslant \sum_{k\in S_{l-1}}\alpha_k|\nu_k|\omega_k^{-1}\leqslant (s-1)^{-1}\sum_{k\in S_{l-1}}|\nu_k|\omega_k.
    $$
    By Cauchy-Schwartz inequality and $q\geqslant2$, for any $l\geqslant 2$, 
    \begin{align*}
        \|\nu_{S_l}\|_q&=(\sum_{j\in S_l}(|\nu_j|\omega_j^{-1})^q\omega_j^q)^{\frac{1}{q}}\\
       &\leqslant\frac{s^{1/q}}{s\theta^2}\|\nu_{S_{l-1}}\|_{\omega,1}\\
        &\leqslant \frac{1}{\theta^2}s^{1/q-1}\|\nu_{S_{l-1}}\|_{\omega,1},
    \end{align*}
   
\begin{align*}
    \|\nu_{S_{01}^c}\|_q&\leqslant \sum_{l\geqslant 2}\|\nu_{S_{l}}\|_q\leqslant\sum_{l\geqslant 2}\|\nu_{S_{l}}\|_2\\
    &\leqslant\frac{1}{\theta^2}s^{1/q-1}(\|\nu_{S_1}\|_{\omega,1}+\|\nu_{S_2}\|_{\omega,1}+\ldots)\\
    &\leqslant\frac{1}{\theta^2}s^{1/q-1}\|\nu_{S^c}\|_{\omega,1} .
\end{align*}

  Substitute $\|\nu_{S_l}\|_q$ in \eqref{l2tl2} with the above estimate, we have  
    \begin{align*}
    \label{hs2}
        \|\nu_S\|_q&\leqslant\|\nu_{S_{01}}\|_q\\
       &\leqslant\frac{(1+\delta_{2s})^{1/q}}{\mu_{p,q}(1-\delta_{2s})^{2/q}}\|\Phi \nu\|_p+\frac{C_{p,q}}{\theta^2(1-\delta_{2s})^{2/q}}s^{1/q-1}\|\nu_{S^c}\|_{\omega,1}.
    \end{align*}

    So $\rho=\frac{C_{p,q}}{\theta^2(1-\delta_{2s})^{2/q}}$ and $\gamma=\frac{(1+\delta_{2s})^{1/q}}{\mu_{p,q}(1-\delta_{2s})^{2/q}}$.
\end{proof}

\section{Numerical Implementation}
\label{sec:exp}
In this section, proximal operators are introduced to provide specific algorithms for $\omega$-BPDQ$_p$. Additionally, experiments for test signals will be presented.
\subsection{Proximal Optimization}

 The  $\omega$-BPDQ$_p$ model can be regarded as a special case of the convex problem formulated as:
\begin{equation}
\label{convexpro}
    \arg \min _{x \in \mathcal{H}} f_{1}(x)+f_{2}(x)
\end{equation}
where $\mathcal{H}=\mathbb{R}^{N}$ is seen as a Hilbert space equipped with the inner product $\langle x, z\rangle=\sum_{i} x_{i} z_{i}$.

 In $\omega$-BPDQ$_p$, $f_1(x)=\|x\|_{\omega,1}$ and $f_{2}(x)=\imath_{T^{p}(\epsilon)}(x)=\begin{cases}
  0 \text{ if } x\in {T^{p}(\epsilon)}\\
  \infty \text{ if } x\notin {T^{p}(\epsilon)}
\end{cases}$, where ${T^{p}(\epsilon)}=\left\{x \in \mathbb{R}^{N}:\left\|y_{\mathrm{q}}-\Phi x\right\|_{p} \leqslant \epsilon\right\}$.

The domain of a function $f:\mathcal{H}\to(-\infty,+\infty]$ is defined as $\operatorname{dom} f=\{x \in \mathcal{H}: f(x)<+\infty\}$. If a convex function $f:\mathcal{H}\to(-\infty,+\infty]$ is lower semi-continuous, i.e., $\liminf _{x \rightarrow x_{0}} f(x)=f\left(x_{0}\right)$ for all $x_{0} \in \operatorname{dom} f$, and $\operatorname{dom} f\ne \emptyset$, which means $f$ is not identically equal to $\infty$ everywhere, then we denote the function space consisting of all lower semi-continuous functions $f$ as $\Gamma_{0}(\mathcal{H})$.

To solve \eqref{convexpro}, we will introduce the proximal mapping, defined as the unique solution $\operatorname{prox}_{f}(x)=\arg \min _{z \in \mathcal{H}} \frac{1}{2}\|z-x\|_{2}^{2}+$ $f(z)$ associated with a closed convex function $f\in\Gamma_{0}(\mathcal{H})$\cite{jac2011}\cite{j2014A}\cite{drsplliting}. We recall that the subdifferential of a function $f \in \Gamma_{0}(\mathcal{H})$ at $x \in \mathcal{H}$ is $\partial f(x)=\{u \in \mathcal{H}$ : $\forall z \in \mathcal{H}, f(z) \geqslant f(x)+\langle u, z-x\rangle\}$.  

\begin{proposition}[see Theorem B.1 in \cite{j2014A}]
\label{b1}
     A vector $x$ is a minimum of a convex function $f$ if and only if $0\in\partial f(x)$.
\end{proposition}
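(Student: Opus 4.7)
The plan is a direct unraveling of the subdifferential definition stated just before the proposition, namely $\partial f(x) = \{u \in \mathcal{H} : \forall z \in \mathcal{H},\, f(z) \geqslant f(x) + \langle u, z-x\rangle\}$. No additional convex-analytic machinery is required; both implications reduce to specializing this defining inequality at $u = 0$ and using $\langle 0, z-x\rangle = 0$.

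For the forward implication, I would assume $x$ is a global minimum of $f$, so that $f(z) \geqslant f(x)$ for every $z \in \mathcal{H}$. Adding the identically-zero term $\langle 0, z-x\rangle$ to the right-hand side rewrites this as $f(z) \geqslant f(x) + \langle 0, z-x\rangle$ for all $z \in \mathcal{H}$, which is exactly the defining property that witnesses $0 \in \partial f(x)$.

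For the converse, I would start from $0 \in \partial f(x)$ and apply the definition with the subgradient taken to be $u = 0$: for every $z \in \mathcal{H}$, $f(z) \geqslant f(x) + \langle 0, z-x\rangle = f(x)$, which is precisely the statement that $x$ is a global minimizer of $f$.

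There is essentially no obstacle, since the proposition is a tautological reformulation of the subdifferential definition; I would only remark that the equivalence itself does not truly need convexity, but the assumption $f \in \Gamma_0(\mathcal{H})$ is what endows $\partial f$ with enough structure (nonemptiness on the relative interior of the domain, monotonicity, closedness) to make the criterion $0 \in \partial f(x)$ a usable optimality condition, which is how it will feed into the proximal / Douglas--Rachford analysis of $\omega$-BPDQ$_p$ in the remainder of the section.
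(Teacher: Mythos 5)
Your proof is correct and complete: both implications follow from specializing the defining inequality of $\partial f(x)$ at $u=0$, exactly as you do, and your side remark that convexity is not needed for the equivalence itself (only for making $\partial f$ useful) is accurate. The paper offers no proof of this proposition at all --- it simply cites Theorem B.1 of \cite{j2014A} --- and your argument is the standard one that reference contains, so there is nothing further to reconcile.
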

\begin{proposition}[see Theorem B.3 in \cite{j2014A}]
\label{b3}
    Let $f\in\Gamma_{0}(\mathcal{H})$, then $y=\operatorname{prox}_{f}(x)$ if and only if $x\in y+\partial f(y)$.
\end{proposition}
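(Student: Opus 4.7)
The plan is to rewrite $y=\operatorname{prox}_f(x)$ as a minimization problem, apply the first-order optimality condition from Proposition \ref{b1}, and then compute the subdifferential of the objective using the sum rule together with the fact that the quadratic term is everywhere differentiable.

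First I would set
\[
    g(z) := \tfrac{1}{2}\|z-x\|_2^2 + f(z),
\]
and observe that $g \in \Gamma_0(\mathcal{H})$ because $f \in \Gamma_0(\mathcal{H})$ and the quadratic term is continuous, convex, and finite on all of $\mathcal{H}$. By the definition of $\operatorname{prox}_f$, the assertion $y = \operatorname{prox}_f(x)$ means exactly that $y$ minimizes $g$. Applying Proposition \ref{b1}, this is equivalent to $0 \in \partial g(y)$.

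Next I would identify $\partial g(y)$. Let $\varphi(z) := \tfrac{1}{2}\|z-x\|_2^2$. Since $\varphi$ is Fr\'echet differentiable on all of $\mathcal{H}$ with $\nabla \varphi(z) = z-x$, one has $\partial \varphi(y) = \{y-x\}$. Because $\operatorname{dom}\varphi = \mathcal{H}$, the qualification condition for the Moreau--Rockafellar sum rule is automatically satisfied, so
\[
    \partial g(y) \;=\; \partial \varphi(y) + \partial f(y) \;=\; (y-x) + \partial f(y).
\]
Therefore $0 \in \partial g(y)$ if and only if $x - y \in \partial f(y)$, which is the same as $x \in y + \partial f(y)$. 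This chain of equivalences yields the claim in both directions.

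The only delicate point is the sum rule; this is where one could imagine trouble in an infinite-dimensional setting, but it is resolved for free here since $\varphi$ is finite and continuous everywhere on $\mathcal{H}$. Every remaining step is a direct application of the definitions and Proposition \ref{b1}, so no further technical machinery is needed.
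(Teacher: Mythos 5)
Your proof is correct. The paper itself gives no proof of Proposition \ref{b3} --- it imports the result by citation from Theorem B.3 of \cite{j2014A} --- and your argument (recast $y=\operatorname{prox}_f(x)$ as minimization of $g(z)=\tfrac{1}{2}\|z-x\|_2^2+f(z)$, invoke the optimality condition of Proposition \ref{b1}, and split $\partial g(y)=(y-x)+\partial f(y)$ via the Moreau--Rockafellar sum rule, whose qualification condition holds since the quadratic term is finite and continuous on all of $\mathcal{H}$) is exactly the standard proof of that result.
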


With Proposition \ref{b3}, $0\in\partial (f_1+f_2)(x)$, $x$ is also the solution of  \eqref{convexpro}. And it still holds in reverse.

In practice, it is easier to evaluate the proximal mappings of $f_1$ and $f_2$ separately than directly compute $\operatorname{prox}_{\beta (f_1+f_2)}(x)$. Since $f_1$ and $f_2$ are nondifferentiable in  $\omega$-BPDQ$_p$, and the proximal mapping is not linear, a particular monotone operator splitting method known as the Douglas–Rachford (DR) splitting method is applicable in this case\cite{drsplliting}.

\begin{equation}
\label{dr}
    x_{k+1}=\left(1-\frac{\alpha_{k}}{2}\right) x_k+\frac{\alpha_{k}}{2} S_{\gamma}^{\odot} \circ \mathcal{P}_{T_{p}(\epsilon)}^{\odot}\left(x_k\right),
\end{equation}

where $A^{\odot} := 2 A-Id$ for any operator $A, \alpha_{k} \in(0,2)$ for all $k \in \mathbb{N}, S_{\gamma}=\operatorname{prox}_{\gamma f_{1}}$ with $\gamma>0$ and $\mathcal{P}_{T_{p}(\epsilon)}=\operatorname{prox}_{f_{2}}$.

The next objective is to show that the DR splitting method will converge to the solution of \eqref{convexpro}.
Simplified \eqref{dr}, we have 
$$
x_{k+1}=x_k+\alpha_k\operatorname{prox}_{\gamma f_1}(2\operatorname{prox}_{f_2}(x_k)-x_k)-\alpha_k\operatorname{prox}_{f_2}(x_k).
$$

Denote $F(x)=x+\operatorname{prox}_{g}(2\operatorname{prox}_{f}(x)-x)-\operatorname{prox}_{f}(x)$
and $G(x)=x-F(x)=\operatorname{prox}_{f}(x)-\operatorname{prox}_g(2\operatorname{prox}_{f}(x)-x)$, we have the fixed point iteration with relaxation $\alpha$
\begin{equation}
    \begin{split}
\label{fpi}
    x_{k+1}&=(1-\alpha_k)x_k+\alpha_kF(x_k)\\
    &=x_k-\alpha_kG(x_k).
      \end{split}
\end{equation}
with $g=\gamma f_1$ and $f=f_2$.

It is necessary to prove that the fixed point iteration converges to a point, and the solution of \eqref{convexpro} is associated with the fixed point of $F(x)$. A similar proof can be found in \cite{Combettes2004}. For the sake of completeness, another proof is provided below.
\begin{proposition}
    \label{fpconv}
    If in the fixed point iteration \eqref{fpi}, $F$ has fixed points and $\alpha_k\in[\alpha_{\min},\alpha_{\max}]$ with $0<\alpha_{\min}<\alpha_{\max}<2$, then $x_k$ converges to a fixed point $x^*$ of \eqref{fpi} and $y_{k+1}=\operatorname{prox}_f(x_k)$ converges to $y^*=\operatorname{prox}_f(x^*)$.
\end{proposition}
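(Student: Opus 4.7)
My plan is to recognize $F$ as the classical Douglas--Rachford fixed-point operator in averaged form, and then invoke the Krasnosel'skii--Mann theorem for averaged nonexpansive maps. Introducing the reflected proximal operators $R_f := 2\operatorname{prox}_f - I$ and $R_g := 2\operatorname{prox}_g - I$ and using the identity $\operatorname{prox}_h = \tfrac{1}{2}(I + R_h)$, a short algebraic manipulation of $F(x) = x + \operatorname{prox}_g(2\operatorname{prox}_f(x) - x) - \operatorname{prox}_f(x)$ rewrites it as
$$F(x) = \tfrac{1}{2}\,x + \tfrac{1}{2}\,R_g\bigl(R_f(x)\bigr) = \tfrac{1}{2}\bigl(I + N\bigr)(x), \qquad N := R_g \circ R_f.$$
This identifies $F$ as $\tfrac{1}{2}$-averaged, provided $N$ is nonexpansive.

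The key technical step, and the main conceptual content of the argument, is to show that $N$ is nonexpansive. Since $f = f_2$ (the indicator of a closed convex set) and $g = \gamma f_1$ (a scaled weighted $\ell_1$ norm) both lie in $\Gamma_0(\mathcal{H})$, their proximal operators are firmly nonexpansive by the standard Moreau--Yosida theory. A well-known equivalence in convex analysis then yields that the reflected operators $R_f$ and $R_g$ are $1$-Lipschitz, and the composition $N = R_g \circ R_f$ inherits nonexpansiveness. Hence $F$ is $\tfrac{1}{2}$-averaged (in fact, firmly nonexpansive).

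With this structure in hand, the iteration \eqref{fpi} reads
$$x_{k+1} = \bigl(1 - \tfrac{\alpha_k}{2}\bigr)\,x_k + \tfrac{\alpha_k}{2}\,N(x_k),$$
so the effective relaxation with respect to $N$ is $\lambda_k := \alpha_k/2 \in [\alpha_{\min}/2,\alpha_{\max}/2] \subset (0,1)$, which satisfies the standing Krasnosel'skii--Mann condition $\sum_k \lambda_k(1-\lambda_k) = +\infty$. Combined with the hypothesis that $F$ (equivalently $N$) has fixed points, this yields convergence of $(x_k)$ to some fixed point $x^*$ of $F$; since the ambient space is $\mathbb{R}^N$, weak and strong convergence coincide. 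The second assertion is then immediate from continuity of $\operatorname{prox}_f$ (firmly nonexpansive operators are $1$-Lipschitz): $y_{k+1} = \operatorname{prox}_f(x_k) \to \operatorname{prox}_f(x^*) =: y^*$. The main obstacle is the verification of the averaged structure of $F$ together with the careful bookkeeping of the relaxation parameter when passing from the iteration on $F$ to the equivalent iteration on $N$; everything else is a routine application of standard proximal calculus and the Krasnosel'skii--Mann theorem.
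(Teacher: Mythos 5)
Your proposal is correct. The identity $F=\tfrac{1}{2}(I+R_g\circ R_f)$ checks out against the paper's definition of $F$, the equivalence between firm nonexpansiveness of $\operatorname{prox}_h$ and nonexpansiveness of $R_h=2\operatorname{prox}_h-I$ is standard, your relaxation bookkeeping $\lambda_k=\alpha_k/2\in[\alpha_{\min}/2,\alpha_{\max}/2]\subset(0,1)$ is exactly what is needed for Krasnosel'skii--Mann, and $\operatorname{Fix}(F)=\operatorname{Fix}(N)$, so the conclusion (with strong convergence, since the space is finite-dimensional) follows, as does the final claim by $1$-Lipschitz continuity of $\operatorname{prox}_f$. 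Where you differ from the paper: the paper never factors $F$ through reflected operators and never invokes Krasnosel'skii--Mann. Instead, starting only from firm nonexpansiveness of the two proximal mappings, it shows by direct inner-product manipulation that $F$ itself is firmly nonexpansive, $(F(x)-F(\hat x))^T(x-\hat x)\geqslant\|F(x)-F(\hat x)\|_2^2$, and that $G=I-F$ satisfies the analogous inequality; it then runs the Fej\'er-monotonicity argument explicitly: for a fixed point $x^*$,
\begin{equation*}
\|x_{k+1}-x^*\|_2^2-\|x_k-x^*\|_2^2\leqslant-\alpha_k(2-\alpha_k)\|G(x_k)\|_2^2\leqslant-M\|G(x_k)\|_2^2,\qquad M=\alpha_{\min}(2-\alpha_{\max}),
\end{equation*}
which gives summability of $\|G(x_k)\|_2^2$, hence $G(x_k)\to 0$, boundedness of the iterates, a convergent subsequence whose limit is a zero of $G$ by continuity, and finally uniqueness of the limit point. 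In effect the paper inlines a proof of the Krasnosel'skii--Mann theorem in the finite-dimensional, firmly nonexpansive special case. Your route buys brevity and places the result squarely in the standard operator-splitting framework of averaged operators and reflected resolvents, but it leans on an external theorem; the paper's route is longer but fully self-contained. Both arguments ultimately rest on the same two pillars: firm nonexpansiveness of proximal mappings and relaxation parameters bounded away from $0$ and $2$.
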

\begin{proof}
  Define $y=\operatorname{prox}_f(x)$ and $\nu=\operatorname{prox}_g(2y-x)$, $F$ can be expressed as $F(x)=x+\nu-y$.

   Since proximal mappings are firmly nonexpansive\cite{rauhut2016}, for any $x$, $\hat{x}$, $y$ and $\hat{y}$, we have
   \begin{align*}
   \label{fnexpansive}
       (y-\hat{y})^T(x-\hat{x})&\geqslant\|y-\hat{y}\|_2^2.\\
       (\nu-\hat{\nu})^T(2y-x-2\hat{y}+\hat{x})&\geqslant\|\nu-\hat{\nu}\|_2^2.
   \end{align*}

   Then 
\begin{align*}
    (F(x)-F(\hat{x}))^T(x-\hat{x})&=(x+\nu-y-\hat{x}-\hat{\nu}+\hat{y})^T(x-\hat{x})\\
    &\geqslant(x+\nu-y-\hat{x}-\hat{\nu}+\hat{y})^T(x-\hat{x})\\
    &\quad-(y-\hat{y})^T(x-\hat{x})+\|y-\hat{y}\|_2^2\\
    &=(\nu-\hat{\nu})^T(x-\hat{x})+\|x-y+\hat{x}-\hat{y}\|_2^2\\
    &=(\nu-\hat{\nu})^T(2y-x-2\hat{y}+\hat{x})\\
    &\quad-\|\nu-\hat{\nu}\|_2^2+\|F(x)-F(\hat{x})\|_2^2\\
    &\geqslant\|F(x)-F(\hat{x})\|_2^2.
\end{align*}

Similarly, 
\begin{align*}
    \|G(x)-G(\hat{x})\|_2^2&=(G(x)-G(\hat{x}))^T(G(x)-G(\hat{x}))\\
    &=(x-\hat{x}-F(x)+F(\hat{x}))^T(x-\hat{x}-F(x)+F(\hat{x}))\\
    &=\|x-\hat{x}\|_2^2+\|F(x)-F(\hat{x})\|_2^2-(x-\hat{x})^T(F(x)\\
    &\quad-F(\hat{x}))-(F(x)-F(\hat{x}))^T(x-\hat{x})\\
    &\leqslant\|x-\hat{x}\|_2^2-\|F(x)-F(\hat{x})\|_2^2\\
    &\leqslant(G(x)-G(\hat{x}))^T(x-\hat{x}).
\end{align*}

Suppose that $x^*$ is a fixed point of $F$, i.e. $F(x^*)=x^*$, and $G(x^*)=0$.
Consider the $k_{th}$ iteration:
\begin{align*}
    &\quad\|x_{k+1}-x^*\|_2^2- \|x_{k}-x^*\|_2^2\\
    &=2(x_{k+1}-x_k)^T(x_k-x^*)+\|x_{k+1}-x_k\|_2^2\\
    &\leqslant-2\alpha_kG(x_k)^T(x_k-x^*)+\alpha_k^2\|G(x_k)\|_2^2\\
    &\leqslant-\alpha_k(2-\alpha_k)\|G(x_k)\|_2^2\\
    &\leqslant-M\|G(x_k)\|_2^2.
\end{align*}
where $M=\alpha_{\min}(2-\alpha_{\max})$.

$M\|G(x_k)\|_2^2\leqslant \|x_{k}-x^*\|_2^2- \|x_{k+1}-x^*\|_2^2$ implies that $M\sum_{k=0}^{\infty}\|G(x_k)\|_2^2\leqslant\|x_0-x^*\|_2^2$. The series $\sum_{k=0}^{\infty}\|G(x_k)\|_2^2$ is bounded, so $\|G(x_k)\|_2^2$ converges to $0$.

And since $\|x_{k+1}-x^*\|_2^2- \|x_{k}-x^*\|_2^2\leqslant0$, $\|x_{k}-x^*\|_2^2$ is nonincreasing, then the sequence $\{x_k\}_{k=0}^{\infty}$ is bounded. There exists a subsequence $\{\bar{x}_k\}_{k=0}^{\infty}$ that converges to $\bar{x}$.

By the continuity of $G$, $0=\lim_{k\to\infty}G(\bar{x}_k)=G(\bar{x})$. Then $\bar{x}$ is a zero of $G$ and $\lim_{k\to\infty}\|x_k-\bar x\|_2$ exists. So $\bar{x}$ is a fixed point of $F$.

Then the uniqueness of the solution will be exhibited. Suppose that $\bar{u}$ and $\bar{v}$ are two different limit points, and there exist subsequences of $\{\bar{x}_k\}_{k=0}^{\infty}$ converge to $\bar{u}$ and $\bar{v}$ respectively. Since $\lim_{k\to\infty}\|x_k-\bar u\|_2$ and $\lim_{k\to\infty}\|x_k-\bar v\|_2$ exist, 
$$
    \|\bar{u}-\bar{v}\|_2=\lim_{k\to\infty}\|x_k-\bar u\|_2=\lim_{k\to\infty}\|x_k-\bar v\|_2=0.
$$

There is a contradiction with $\bar{u}\neq\bar{v}$, so $\bar{x}=x^*$ is the unique fixed point of $F$.
\end{proof}

Since $F(x)$ is continuous, $x^*$ is the fixed point of $F(x)$, i.e., $x^*=F(x^*)$. It can be shown that $\operatorname{prox}_{f_2}(x^*)$ is the solution of \eqref{convexpro} in the following proposition.

\begin{proposition}
    \label{proxx*}
    $x^*$ is a fixed point of $F$ if and only if $y=\operatorname{prox}_{f_2}(x^*)$ satisfies $0\in\partial f_1(y)+\partial f_2(y)$.
\end{proposition}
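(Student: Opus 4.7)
The plan is to unfold both occurrences of $\operatorname{prox}$ in the definition of $F$ by repeatedly applying Proposition \ref{b3}, which converts the proximal identity $y=\operatorname{prox}_h(z)$ into the subdifferential inclusion $z-y\in \partial h(y)$. This should reduce the proposition to a short algebraic manipulation on subgradient inclusions.

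First I would set $y=\operatorname{prox}_{f_2}(x^*)$, so Proposition \ref{b3} gives immediately
\[
x^*-y\in \partial f_2(y).
\]
Next, using the explicit form $F(x)=x+\operatorname{prox}_{\gamma f_1}\!\bigl(2\operatorname{prox}_{f_2}(x)-x\bigr)-\operatorname{prox}_{f_2}(x)$, the fixed point equation $F(x^*)=x^*$ collapses, after substituting $y=\operatorname{prox}_{f_2}(x^*)$, to $\operatorname{prox}_{\gamma f_1}(2y-x^*)=y$. A second application of Proposition \ref{b3} then yields
\[
y-x^*\in \partial(\gamma f_1)(y)=\gamma\,\partial f_1(y).
\]
Summing the two inclusions produces $0\in \gamma\,\partial f_1(y)+\partial f_2(y)$, which is almost the target conclusion. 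To land on the proposition's formulation, observe that $f_2=\iota_{T^{p}(\epsilon)}$, so $\partial f_2(y)$ is the normal cone to $T^{p}(\epsilon)$ at $y$, a cone invariant under multiplication by $\gamma>0$. Hence $0\in \gamma\,\partial f_1(y)+\partial f_2(y)$ if and only if $0\in \partial f_1(y)+\partial f_2(y)$, giving the forward direction.

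For the converse I would simply reverse the chain: given $u_1\in \partial f_1(y)$ with $-u_1\in \partial f_2(y)$, define $x^*:=y-\gamma u_1$. Then $x^*-y=-\gamma u_1\in \partial f_2(y)$ (again by the cone property of the normal cone), so Proposition \ref{b3} gives $y=\operatorname{prox}_{f_2}(x^*)$; simultaneously, $y-x^*=\gamma u_1\in \gamma\,\partial f_1(y)$, so Proposition \ref{b3} applied in the other direction gives $\operatorname{prox}_{\gamma f_1}(2y-x^*)=y$, and reassembling yields $F(x^*)=x^*$. The only mildly subtle step is reconciling the factor $\gamma$ between $\partial(\gamma f_1)$ and $\partial f_1$; this is not really an obstacle because $\partial f_2$ is a cone, but it is the one place where I would be careful to invoke the indicator-function structure of $f_2$.
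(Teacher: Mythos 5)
Your proof is correct and follows essentially the same route as the paper's: both arguments collapse the fixed-point identity $F(x^*)=x^*$ to $y=\operatorname{prox}_{\gamma f_1}(2y-x^*)$, convert the two proximal identities into subgradient inclusions via Proposition \ref{b3}, add the inclusions, and prove the converse by constructing a point of the form $x^*=y-z$ from a pair $z\in\partial f_1(y)$, $-z\in\partial f_2(y)$. The one genuine difference is your treatment of the factor $\gamma$, and there you are actually more careful than the paper: Proposition \ref{b3} applied to $\operatorname{prox}_{\gamma f_1}$ yields $y-x^*\in\partial(\gamma f_1)(y)=\gamma\,\partial f_1(y)$, whereas the paper writes $y-x^*\in\partial f_1(y)$, silently dropping $\gamma$. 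Your observation that $\partial f_2(y)$ is the normal cone to $T^{p}(\epsilon)$, hence closed under multiplication by positive scalars, is exactly what is needed to pass from $0\in\gamma\,\partial f_1(y)+\partial f_2(y)$ to $0\in\partial f_1(y)+\partial f_2(y)$ and back; equivalently, one can note that $\gamma\,\imath_{T^{p}(\epsilon)}=\imath_{T^{p}(\epsilon)}$, so $\gamma f_1+f_2=\gamma(f_1+f_2)$ has the same critical points as $f_1+f_2$. So your proposal is best described not as a different proof but as a corrected version of the paper's: it patches a small but real gap in the published argument. (One caveat you share with the paper: in the converse, both of you construct a new fixed point $x^*=y-\gamma u_1$ rather than showing the originally given $x^*$ is fixed; under the strictest reading of the ``if and only if'' this establishes that some fixed point maps to $y$ under $\operatorname{prox}_{f_2}$, which is the intended meaning but is worth stating explicitly.)
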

\begin{proof}
     $x^*$ is a fixed point of $F$ can be represented as 
    \begin{align*}
        F(x^*)&=x^*\\
        x^*+\operatorname{prox}_{\gamma f_1}(2\operatorname{prox}_{f_2}(x^*)-x^*)-\operatorname{prox}_{f_2}(x^*)&=x^*\\
        y&=\operatorname{prox}_{\gamma f_1}(2y-x^*)
    \end{align*}

    According to Proposition \ref{b3}, $ y=\operatorname{prox}_{\gamma f_1}(2y-x^*)$ if and only if 
    \begin{equation}
        \label{e1}
        y-x^*\in\partial f_1(y).
    \end{equation}

    And $y=\operatorname{prox}_{f_2}(x^*)$ if and only if  \begin{equation}
        \label{e2}
        x^*-y\in\partial f_2(y).
    \end{equation}
    Add \eqref{e1} and \eqref{e2}, we have $0\in f_1(y)+f_2(y)$.

Therefore if we have the fixed point that $F(x^*)=x^*$, then $y=\operatorname{prox}_{f_2}(x^*)$ satisfies $0=(y-x^*)+(x^*-y)\in f_1(y)+f_2(y)$. 
Conversely, if there exists $z$ such that $z\in\partial f_1(y)$ and $-z\in\partial f_2(y)$, then $x^*=y-z$ is a fixed point of $F$.
\end{proof}

In the subsequent part of this section, we will present the proximal operators of specific functions.

In $\omega$-BPDQ$_p$, $f_1(x)=\|x\|_{\omega,1}$, which can be interpreted as $\|Wx\|_1$, where $W$ is a diagonal matrix with the entries of the weight $\omega$ on its diagonal.

Let $R(x)=\|x\|_1$, since Proposition \ref{b3}, $u=\operatorname{prox}_R(x)$ if and only if $x-u\in\partial R(u)$. It is easy to compute $\partial R(u)$ with its components as follows:
$$
\partial |u_i|=\begin{cases}
  1& \text{ if } u_i>0 \\
  -1& \text{ if } u_i<0 \\
  [-1,1]& \text{ if } u_i=0.
\end{cases}
$$

So for any constant $\gamma>0$, $\operatorname{prox}_{\gamma R}(x)$ can be obtained by the soft thresholding operator:
$$
(\operatorname{prox}_{\gamma R}(x))_i=\begin{cases}
  x_i-\gamma& \text{ if } ux_i>\gamma \\
  0& \text{ if } |x_i|\leqslant\gamma \\
  x_i+\gamma& \text{ if } x_i<-\gamma.
\end{cases}
$$

To compute the proximal operator of the weighted norm, it is necessary to find the proximal operator of the function composition.

\begin{proposition}[see Proposition 11 in \cite{drsplliting}]
   Let $f \in \Gamma_0(\mathcal{H})$ and $L$ be a bounded linear operator that satisfies $L \circ L^*=\nu \mathrm{Id}$ for some $\nu \in(0,+\infty)$.

Then the proximal operator of $f \circ L$ can be represented as
$$
\operatorname{prox}_{f \circ L}=\mathrm{Id}+\nu^{-1} L^* \circ\left(\operatorname{prox}_{\nu f}-\mathrm{Id}\right)\circ L .
$$

\end{proposition}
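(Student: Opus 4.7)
The plan is to exploit the tight-frame condition $LL^* = \nu\,\mathrm{Id}$ to reduce the minimization
\begin{equation*}
\operatorname{prox}_{f\circ L}(x) = \arg\min_{z\in\mathcal{H}}\left\{\tfrac{1}{2}\|z-x\|_2^2 + f(Lz)\right\}
\end{equation*}
to a proximal problem for $f$ itself. The key move is to introduce the auxiliary variable $w := Lz$ and split the joint minimization into an inner minimization of $\tfrac12\|z-x\|_2^2$ over the affine set $\{z : Lz = w\}$, followed by an outer minimization over $w$.

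For fixed $w$, the inner problem is the projection of $x$ onto an affine subspace. Because $LL^* = \nu\,\mathrm{Id}$ is invertible (with inverse $\nu^{-1}\mathrm{Id}$), the projection admits the explicit form $z(w) = x + L^*(LL^*)^{-1}(w - Lx) = x + \nu^{-1}L^*(w - Lx)$, with optimal value $\tfrac{1}{2\nu}\|w - Lx\|_2^2$. Substituting back, the outer problem becomes $\min_w\{\tfrac{1}{2\nu}\|w - Lx\|_2^2 + f(w)\}$; multiplying the objective by $\nu$ identifies the unique minimizer as $w^* = \operatorname{prox}_{\nu f}(Lx)$. Inserting $w^*$ into $z(w)$ gives the claimed identity.

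The main obstacle is to justify the splitting of the infima rigorously: I would need to confirm that the joint minimizer over $(z,w)$ subject to $w=Lz$ is attained and that its $z$-coordinate is a minimizer of the original unconstrained problem. This should follow from a direct consistency check: $z(w)$ automatically satisfies $Lz(w) = Lx + \nu^{-1}(LL^*)(w - Lx) = w$, so the reparametrization is surjective onto the range of $L$, and existence and uniqueness of $\operatorname{prox}_{\nu f}(Lx)$ are guaranteed by $f\in\Gamma_0(\mathcal{H})$. As a cleaner alternative that avoids any infimum interchange, I could set $u$ equal to the proposed right-hand side, use $LL^* = \nu\,\mathrm{Id}$ to verify that $Lu = \operatorname{prox}_{\nu f}(Lx)$, and then invoke Proposition \ref{b3}: the defining relation $Lx - Lu \in \partial(\nu f)(Lu)$ yields $\nu^{-1}(Lx - Lu)\in\partial f(Lu)$, hence $x - u = \nu^{-1}L^*(Lx - Lu) \in L^*\partial f(Lu) = \partial(f\circ L)(u)$, which by Proposition \ref{b3} is exactly the statement that $u = \operatorname{prox}_{f\circ L}(x)$.
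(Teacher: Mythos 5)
Your proposal is correct, but both of your arguments take a genuinely different route from the paper's. The paper works \emph{forward} from the unknown point $p=\operatorname{prox}_{f\circ L}(x)$: it invokes the chain-rule identity $\partial(f\circ L)=L^{*}\circ\partial f\circ L$ (asserted without proof, and in general requiring a qualification condition that surjectivity of $L$ supplies), applies $L$ to the inclusion $x-p\in L^{*}(\partial f(Lp))$ to deduce $Lp=\operatorname{prox}_{\nu f}(Lx)$, and then reconstructs $p$ by splitting it along the orthogonal decomposition $\ker L\oplus\operatorname{ran}L^{*}$, using that $x-p\in\operatorname{ran}L^{*}$ forces $P_{\ker L}\,p=P_{\ker L}\,x$ while $P_{(\ker L)^{\perp}}=\nu^{-1}L^{*}\circ L$. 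Your primary argument instead performs partial minimization over the fibers $\{z:Lz=w\}$ --- an inner projection onto an affine set followed by the outer problem $\min_{w}\frac{1}{2\nu}\|w-Lx\|_{2}^{2}+f(w)$ --- which avoids subdifferential calculus altogether; the infimum interchange you flag as the main obstacle is unproblematic, since it is merely a partition of the domain into fibers (all nonempty because $LL^{*}=\nu\,\mathrm{Id}$ makes $L$ surjective), and uniqueness of the prox of $f\circ L\in\Gamma_{0}(\mathcal{H})$ closes the argument. Your alternative verification argument is closest in spirit to the paper (it uses Proposition \ref{b3} and the same relation $Lu=\operatorname{prox}_{\nu f}(Lx)$) but runs it \emph{backwards}: define the candidate $u$, check $Lu=\operatorname{prox}_{\nu f}(Lx)$, and conclude. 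This direction is in fact more economical than the paper's, because it needs only the elementary inclusion $L^{*}\partial f(Lu)\subseteq\partial(f\circ L)(u)$ --- immediate from the definition of the subdifferential --- rather than the full chain-rule equality, and it dispenses with the projection decomposition entirely. One small correction to that version: state the inclusion $L^{*}\partial f(Lu)\subseteq\partial(f\circ L)(u)$ rather than the equality you wrote; the inclusion is all your argument uses and, unlike the equality, it holds with no qualification condition.
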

\begin{proof}

Set $g=f \circ L$ and $p=\operatorname{prox}_g (x)$ for any $x \in \mathcal{H}$, then $\partial g=L^*\circ\partial f \circ L$. 
From Proposition \ref{b3}, $p=\operatorname{prox}_g(x)$ if and only if $x-p \in \partial g(p)=L^*(\partial f(L p))$.
Then it implies that 
\begin{align*}
    L(x-p) &\in L\left(L^*(\partial f(L p))\right)\\
    L x-L p &\in \nu \partial f(L p)=\partial(\nu f)(L p) \\
    L p&=\operatorname{prox}_{\nu f}(L x) .
\end{align*}

Let $V=\operatorname{ker} L$, then $V^{\perp}=$ $\operatorname{ran} L^*$, and the projection of them are
$$
\begin{array}{l}
P_V=\mathrm{Id}-\left(L^* \circ\left(L \circ L^*\right)^{-1} \circ L\right)=\mathrm{Id}-\nu^{-1} L^* \circ L \\
P_{V^{\perp}}=\mathrm{Id}-P_V=\nu^{-1} L^* \circ L .
\end{array}
$$

 Since that $p-x\in L^*(\partial f(L p)) \in \operatorname{ran} L^*$, we have
\begin{align*}
    P_V p&=P_V x+P_V(p-x)=P_V x=\left(\operatorname{Id}-\nu^{-1} L^* \circ L\right) x.\\
    P_{V^{\perp}} p&=\nu^{-1} L^*(L p)=\nu^{-1}\left(L^* \circ \operatorname{prox}_{\nu f} \circ L\right) x.\\
    p&=P_V p+P_{V^{\perp}} p=x+\nu^{-1}\left(L^* \circ\left(\operatorname{prox}_{\nu f}-\mathrm{Id}\right) \circ L\right) x.
\end{align*}
\end{proof}

While $f_2$ can be write as 
$$f_{2}(x)=\imath_{T^{p}(\epsilon)}(x)=\begin{cases}
  0 \text{ if } x\in {T^{p}(\epsilon)}\\
  \infty \text{ if } x\notin {T^{p}(\epsilon)}
\end{cases},$$
 where ${T^{p}(\epsilon)}=\left\{x \in \mathbb{R}^{N}:\left\|y_{\mathrm{q}}-\Phi x\right\|_{p} \leqslant \epsilon\right\}$.
And ${T^{p}(\epsilon)}$ is the function composition of the unit $\ell_p$ ball $B^p=\{y\in\mathbb{R}^m:\|y\|_p\leqslant1\}\subset\mathbb{R}^m$ and the affine operator $A(x):=\frac{1}{\epsilon}(\Phi x-y_q)$, i.e. $f_2(x)=(\imath_{B^p}\circ A)(x)$. So it is necessary to compute $\operatorname{prox}_{\imath_{B^p}}$.

It is obvious that if $f(x)$ is the indicator function of closed convex set $\mathcal{C}$, then $\operatorname{prox}_f(x)$ is the projection on $\mathcal{C}$, i.e. 
$$
    \operatorname{prox}_f(x)=\arg\min_{u\in\mathcal{C}}\|u-x\|_2^2=P_{\mathcal{C}}(x).
$$

If $p=2$, the projection on the unit $\ell_2$ ball is
$$P_{B^2}(x)=
\begin{cases}
    x\text{ if } \|x\|_2<1,\\
    \frac{x}{\|x\|_2}\text{ if }\|x\|_2\geqslant1.
\end{cases}
$$
Similarly, if $2<p<\infty$, $P_{B^p}(x)=x$ when $\|x\|_p<1$.  If $\|x\|_p\geqslant1$, the projection can be expressed by 
$P_{B^p}(x)=\arg\min_{\|u\|_p=1}\|u-x\|_p^p.$ So it has the same solution with the Lagrange function as follows: for $\lambda\in\mathbb{R}$,
\begin{equation}
    \label{lage}
     L(x,u,\lambda)=\frac{1}{2}\|u-x\|_2^2+\lambda(\|u\|_p^p-1).
\end{equation}

Denote the solution of \eqref{lage} as $u^*=P_{B^p}(x)$ and $\lambda^*$. And they are required to satisfy that 
\begin{equation}
\begin{split}
     \label{pareq0}
    \frac{\partial}{\partial u}L(x,u^*,\lambda^*)&=\frac{\partial}{\partial u}\frac{1}{2}\|u^*-x\|_2^2+\frac{\partial}{\partial u}\lambda(\|u^*\|_p^p-1)=0, \\
    \frac{\partial}{\partial\lambda}L(x,u^*,\lambda^*)&=\|u^*\|_p^p-1=0.
\end{split}
\end{equation}

Define $z\in\mathbb{R}^{m+1}$ as $z_i=u_i$, $i=1,\ldots,m$ and $z_{m+1}=\lambda$. And $F=\nabla_zL$ such that \eqref{pareq0} if and only if $F(z^*)=0$. So 
$$F_i(z)=
\begin{cases}
   z_i-x_i+pz_{m+1}z_i^{p-1} \text{ if } i\leqslant m,\\
   (\sum_{j=1}^mz_j^p)-1 \text{ if } i=m+1.
\end{cases}
$$

Then the Newton method is introduced to find $z^*$. Given an initialization point $z^0=(\frac{x}{\|x\|_p},z^0_{m+1}\arg\min_\lambda\|F(u^0,\lambda)\|_2)$, the iterates are represented by 
\begin{equation*}
    z^{n+1}=z^n-J(z^n)^{-1}F(z^n),
\end{equation*}
where $J$ is the Jacobian matrix of $F$, where $J_{ij}=\frac{\partial F_i}{\partial z_j}$. After iterations, $z^n$ will close enough to $(u^*,\lambda^*)$.
\subsection{Experiments}

Before the start of the experiments, we will provide the experimental settings. Let the text signal $x\in\mathbb{R}^N$ be $k$ sparse, where $N=1024$ and $k=16$ are set in this experiment. The entries of $x$ on its support follow the standard normal distribution, and the entries out of its support are zero, i.e.
$$x
\begin{cases}
    \sim\mathcal{N}(0,1) \text{ if }i\in supp(x):=T_0\\
    =0\text{ if }i\notin supp(x).
\end{cases}
$$

The signals  are transferred via the quantizer described in \eqref{quantized}  and gained the measurement vectors $y\in\mathbb{R}^m$, where $m=256$. In the quantizer, the sensing matrix $\Phi\in\mathbb{R}^{m\times N}$ is the standard Gaussian matrix, and the bin width $\alpha=\frac{\|\Phi x\|_{\infty}}{40}$. The maximum number of DR iterations \eqref{dr} is set to 800 to reconstruct the signal. 
The quantity of reconstruction is measured by SNR $=20\log_{10}\frac{\|x\|_2}{\|x-\hat{x}\|_2}$, where $x$ is the real signal and $\hat{x}$ is the recover vector. The quantity of reconstruction increases as the SNR increases.

The prior information support is denoted by $\widetilde{T}$, i.e. for the weight $\omega\in[0,1]^N$, 
$$
\omega_i=
\begin{cases}
    \gamma \text{ if }i\in\widetilde{T}\\
    1 \text{ if }i\notin\widetilde{T}
\end{cases},
$$
where constant $\gamma\in(0,1)$. While $x$ is $k$ sparse, $|T_0|=k$, and we suppose that $|\widetilde{T}|=\rho k$ with $\rho>0$. We separate $\widetilde{T}$ into $T_1=T_0\cap \widetilde{T}$ and $T_2=\{[N]\setminus T_0\}\cap\widetilde{T}$ so that $T_1$ and $T_2$ are uniformly chosen from $T_0$ and $[N]\setminus T_0$ respectively. Let $|T_1|=\alpha|T_0|=\alpha\rho k$, where $\alpha\rho\leqslant1$. Without loss of generality, we set $\rho=1$, $\alpha=\frac{1}{2}$, and $\gamma=0.5$.

For the exact reconstruction, the measurements without noise will be reconstructed via the weighted Basis Pursuit(BP) model:
\begin{equation*}
     \min_{z\in\mathbb{C}^n}\|z\|_{\omega,1}\mbox{ subject to } \Phi z=y.
\end{equation*}

\begin{figure}

   \centering
    \begin{minipage}{.45\textwidth}
        \centering
        \includegraphics[width=\linewidth]{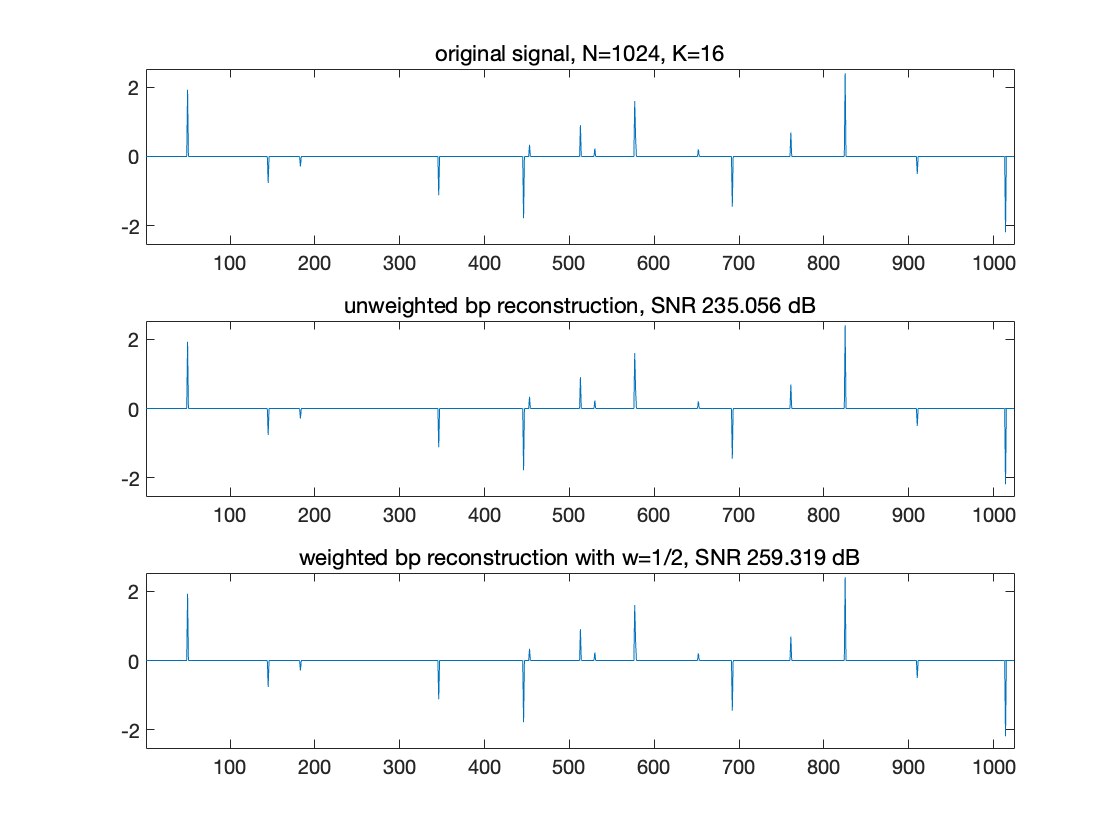}
        \caption{the original signal, the recover signals and their SNRs for Basis Pursuit and weighted Basis Pursuit with $\gamma=0.5$}
        \label{wbp1}
    \end{minipage}
    \hfill 
    \begin{minipage}{.45\textwidth}
        \centering
        \includegraphics[width=\linewidth]{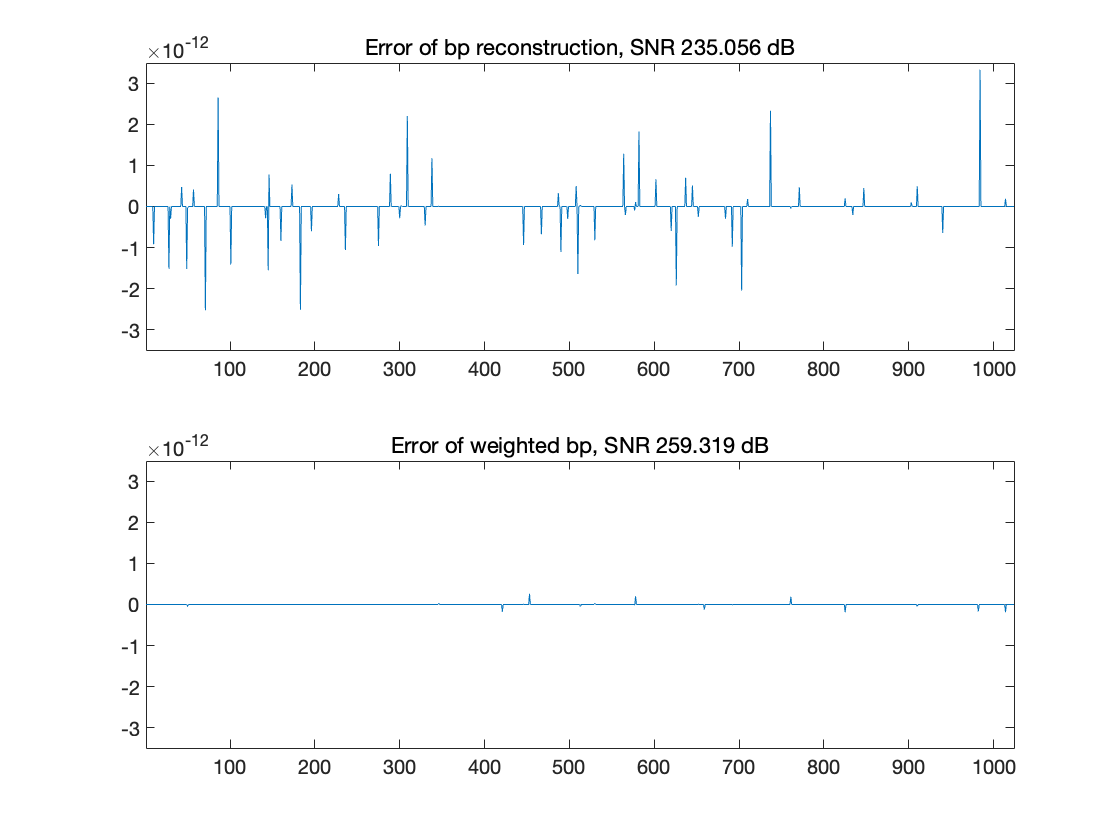}
        \caption{the error of reconstruction for Basis Pursuit and weighted Basis Pursuit with $\gamma=0.5$}
         \label{wbp2}
    \end{minipage}
\end{figure}
In Figure \ref{wbp1} and Figure \ref{wbp2}, both weighted and unweighted BP models can successfully reconstruct the signal, but the weighted BP has a better control on the error of reconstruction and a higher quantity of reconstruction according to the value of SNR.

\begin{figure}

   \centering
    \begin{minipage}{.3\textwidth}
        \centering
        \includegraphics[width=\linewidth]{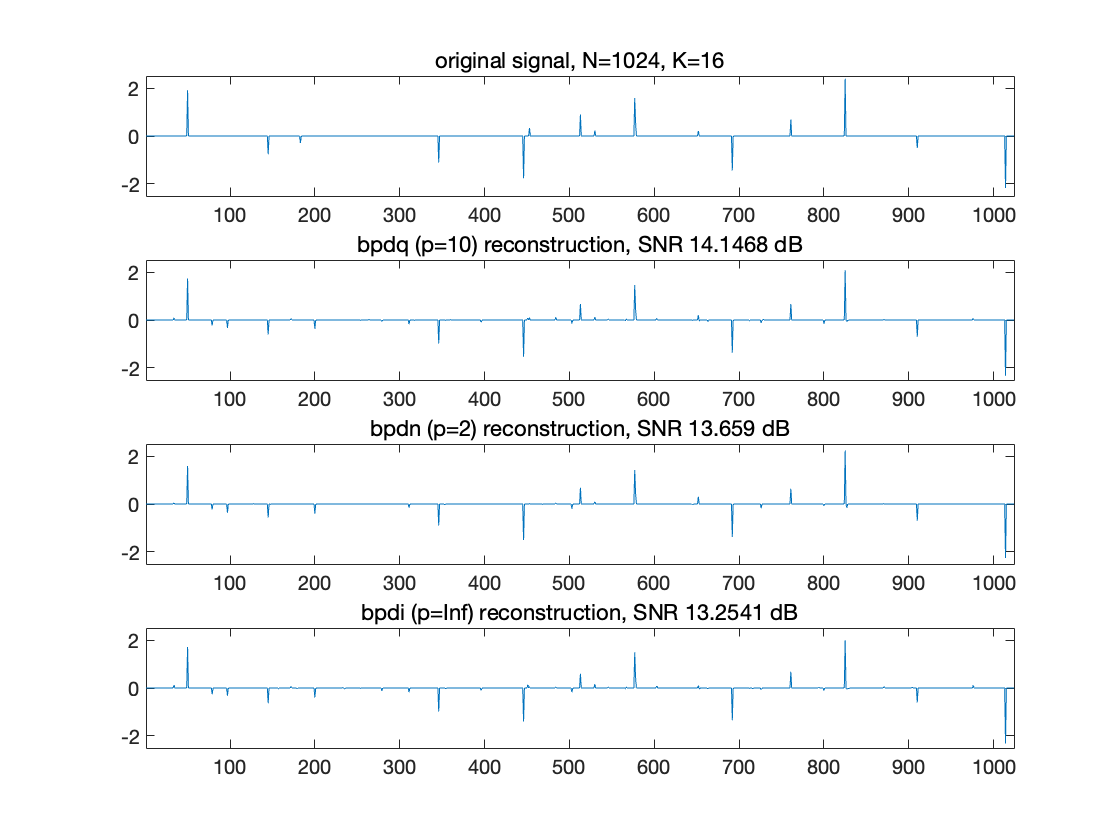}
        \caption{the original signal, the recover signals and their SNRs for $\omega$-BPDQ$_p$ with $p=10,2,\infty$}
        \label{wbpdq1}
    \end{minipage}
    \hfill 
    \begin{minipage}{.3\textwidth}
        \centering
        \includegraphics[width=\linewidth]{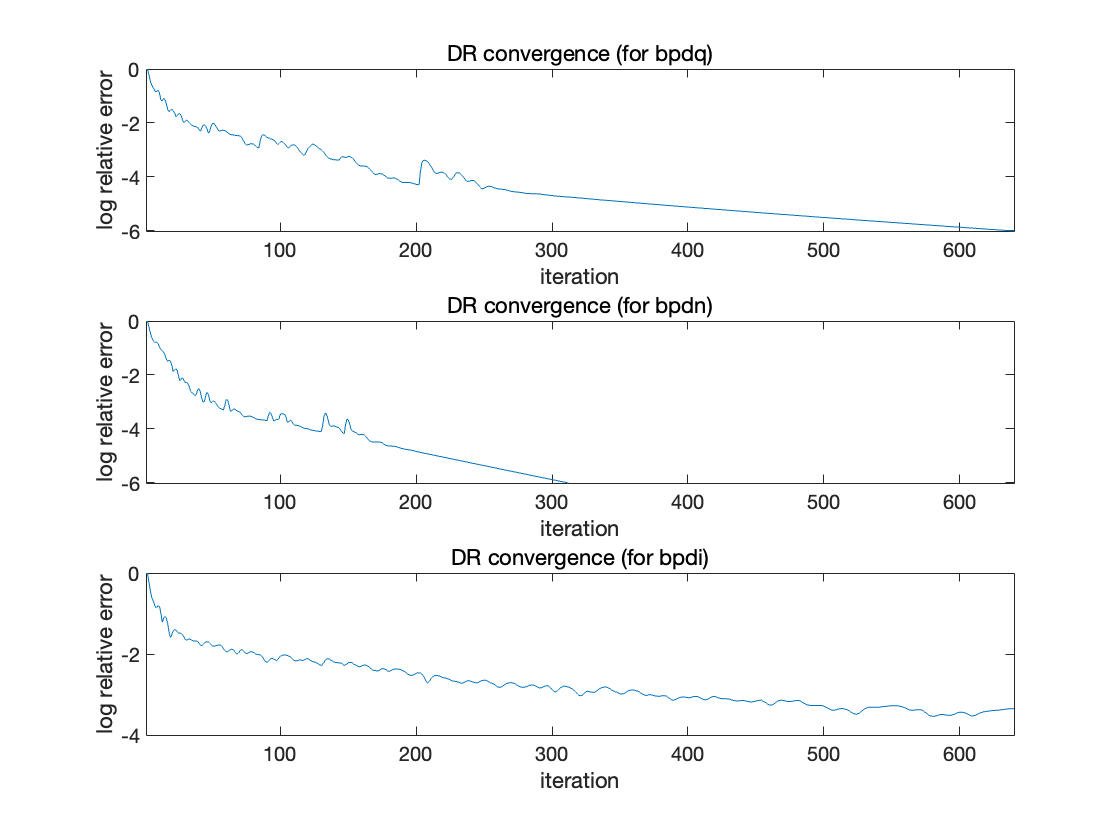}
        \caption{DR convergence for $\omega$-BPDQ$_p$ with $p=10,2,\infty$}
         \label{wbpdq2}
    \end{minipage}
    \hfill 
    \begin{minipage}{.3\textwidth}
        \centering
        \includegraphics[width=\linewidth]{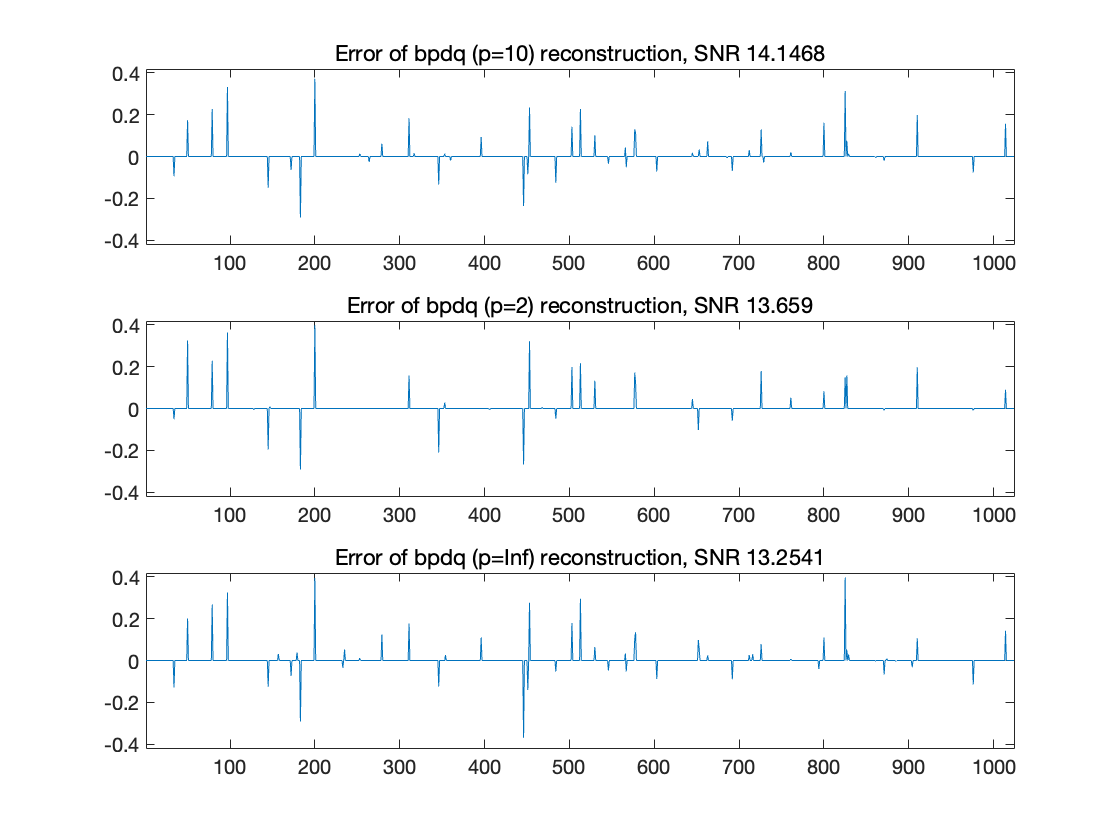}
        \caption{the error of reconstruction for $\omega$-BPDQ$_p$ with $p=10,2,\infty$}
         \label{wbpdq3}
    \end{minipage}
\end{figure}

In Figure \ref{wbpdq1} and Figure \ref{wbpdq3},  $\omega$-BPDQ$_p$ behaves better  compared to $\omega$-BPDN and  $\omega$-BPDQ$_\infty$ complied with SNR. In Figure \ref{wbpdq2}, when $p=10$ and $p=2$, DR algorithm converges to the solution after 800 iterations, but in $\omega$-BPDQ$_\infty$, the convergence is not guaranteed.

The experiments were repeated 50 times, and the mean of SNR for various values of \( p \) and \( m/K \) is plotted in Figure \ref{mean}. As observed, at higher values of \( m/K \), decoders with higher \( p \) yield better reconstruction performance. It is also evident that at lower \( m/K \), increasing \( p \) beyond a certain threshold degrades the reconstruction performance.
\begin{figure}
        \centering
        \includegraphics[width=0.5\linewidth]{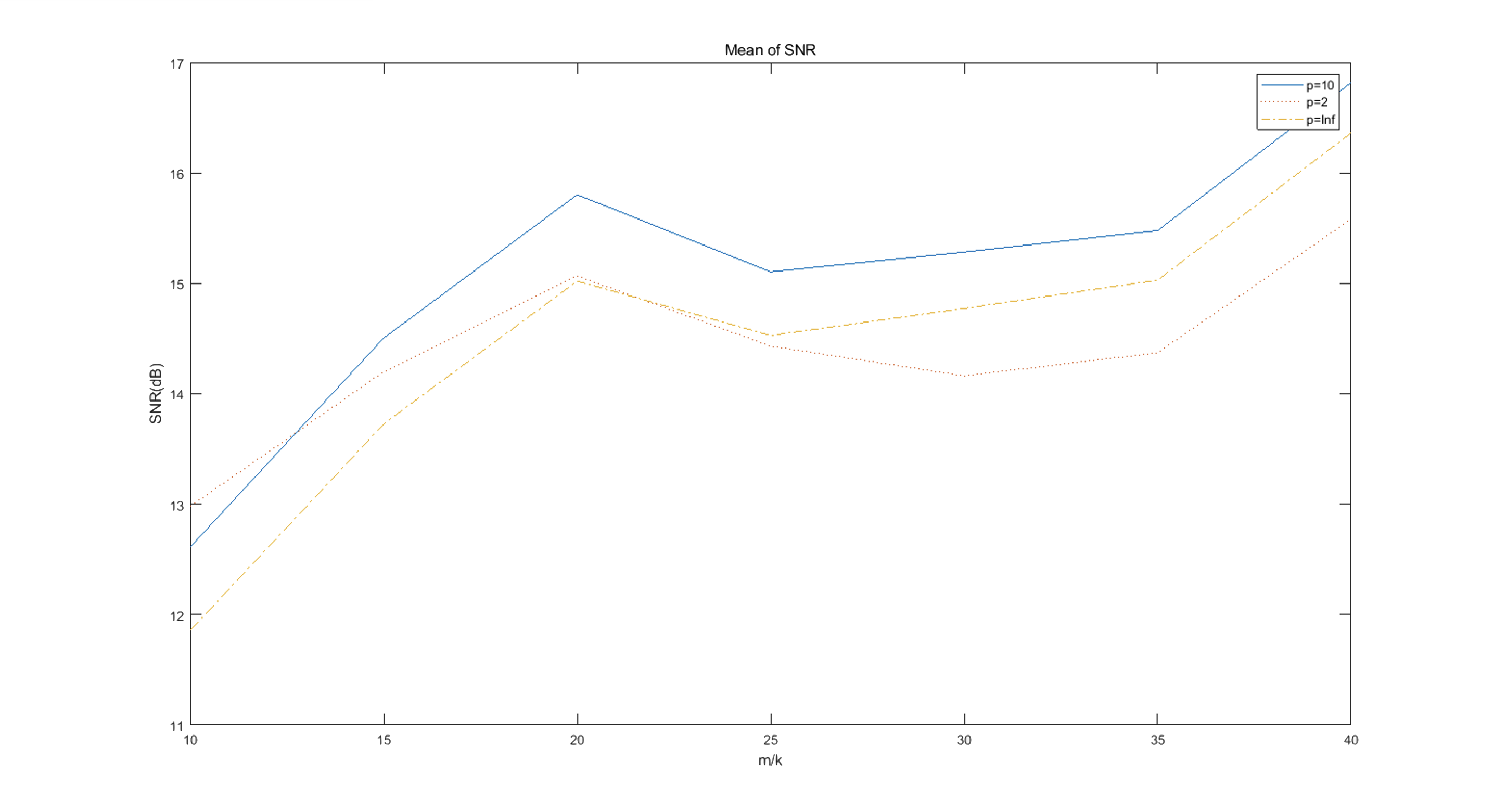}
        \caption{the mean of SNRs for $\omega$-BPDQ$_p$ with $p=10,2,\infty$}
        \label{mean}

\end{figure}

\vspace{11pt}

\newpage
\nocite{*}
\bibliographystyle{IEEEtran}
 \bibliography{cas-refs}

\begin{thebibliography}{10}
\providecommand{\url}[1]{#1}
\csname url@samestyle\endcsname
\providecommand{\newblock}{\relax}
\providecommand{\bibinfo}[2]{#2}
\providecommand{\BIBentrySTDinterwordspacing}{\spaceskip=0pt\relax}
\providecommand{\BIBentryALTinterwordstretchfactor}{4}
\providecommand{\BIBentryALTinterwordspacing}{\spaceskip=\fontdimen2\font plus
\BIBentryALTinterwordstretchfactor\fontdimen3\font minus \fontdimen4\font\relax}
\providecommand{\BIBforeignlanguage}[2]{{%
\expandafter\ifx\csname l@#1\endcsname\relax
\typeout{** WARNING: IEEEtran.bst: No hyphenation pattern has been}%
\typeout{** loaded for the language `#1'. Using the pattern for}%
\typeout{** the default language instead.}%
\else
\language=\csname l@#1\endcsname
\fi
#2}}
\providecommand{\BIBdecl}{\relax}
\BIBdecl

\bibitem{donoho2006}
\BIBentryALTinterwordspacing
D.~L. Donoho, ``Compressed sensing,'' vol.~52, no.~4, p. 1289–1306, 2006. [Online]. Available: \url{https://doi.org/10.1109/TIT.2006.871582}
\BIBentrySTDinterwordspacing

\bibitem{candesorigin}
E.~Candes, J.~Romberg, and T.~Tao, ``Robust uncertainty principles: exact signal reconstruction from highly incomplete frequency information,'' \emph{IEEE Transactions on Information Theory}, vol.~52, no.~2, pp. 489--509, 2006.

\bibitem{candesorigin2}
\BIBentryALTinterwordspacing
E.~J. Candes and J.~Romberg, ``Quantitative robust uncertainty principles and optimally sparse decompositions,'' \emph{Foundations of Computational Mathematics}, vol.~6, no.~2, pp. 227--254, 2006. [Online]. Available: \url{https://doi.org/10.1007/s10208-004-0162-x}
\BIBentrySTDinterwordspacing

\bibitem{vaswani2010}
N.~Vaswani and W.~Lu, ``Modified-cs: Modifying compressive sensing for problems with partially known support,'' \emph{IEEE Transactions on Signal Processing}, vol.~58, no.~9, pp. 4595--4607, 2010.

\bibitem{friedlander2011}
M.~P. Friedlander, H.~Mansour, R.~Saab, and {\"O}.~Yilmaz, ``Recovering compressively sampled signals using partial support information,'' \emph{IEEE Transactions on Information Theory}, vol.~58, no.~2, pp. 1122--1134, 2011.

\bibitem{peng2014}
\BIBentryALTinterwordspacing
J.~Peng, J.~Hampton, and A.~Doostan, ``A weighted l1-minimization approach for sparse polynomial chaos expansions,'' \emph{Journal of Computational Physics}, vol. 267, pp. 92--111, 2014. [Online]. Available: \url{https://www.sciencedirect.com/science/article/pii/S0021999114001442}
\BIBentrySTDinterwordspacing

\bibitem{yu2013}
X.~Yu and S.~J. Baek, ``Sufficient conditions on stable recovery of sparse signals with partial support information,'' \emph{IEEE Signal Processing Letters}, vol.~20, no.~5, pp. 539--542, 2013.

\bibitem{CHEN2019}
W.~Chen and Y.~Li, ``Recovery of signals under the condition on ric and roc via prior support information,'' \emph{Applied and Computational Harmonic Analysis}, vol.~46, no.~2, pp. 417--430, 2019.

\bibitem{ge2021weighted}
Y.~Ge, W.~Chen, H.~Ge, and Y.~Li, ``Weighted lp-minimization for sparse signal recovery under arbitrary support prior,'' 2021.

\bibitem{zhou2013}
\BIBentryALTinterwordspacing
S.~Zhou, N.~Xiu, Y.~Wang, and L.~Kong, ``Exact recovery for sparse signal via weighted l1 minimization,'' \emph{CoRR}, vol. abs/1312.2358, 2013. [Online]. Available: \url{http://arxiv.org/abs/1312.2358}
\BIBentrySTDinterwordspacing

\bibitem{ge2021}
H.~Ge, W.~Chen, and M.~K. Ng, ``On recovery of sparse signals with prior support information via weighted lp-minimization,'' \emph{IEEE Transactions on Information Theory}, vol.~67, no.~11, pp. 7579--7595, 2021.

\bibitem{candes2006IEEE}
E.~J. Candes and T.~Tao, ``Near-optimal signal recovery from random projections: Universal encoding strategies?'' \emph{IEEE Transactions on Information Theory}, vol.~52, no.~12, pp. 5406--5425, 2006.

\bibitem{Gersho1991}
\BIBentryALTinterwordspacing
A.~Gersho and R.~M. Gray, ``Vector quantization and signal compression,'' in \emph{The Kluwer International Series in Engineering and Computer Science}, 1991. [Online]. Available: \url{https://api.semanticscholar.org/CorpusID:118950728}
\BIBentrySTDinterwordspacing

\bibitem{jac2011}
L.~Jacques, D.~K. Hammond, and J.~M. Fadili, ``Dequantizing compressed sensing: When oversampling and non-gaussian constraints combine,'' \emph{IEEE Transactions on Information Theory}, vol.~57, no.~1, pp. 559--571, 2011.

\bibitem{panvan2017}
\BIBentryALTinterwordspacing
\emph{Sampling, Oversampling, and Noise-Shaping}.\hskip 1em plus 0.5em minus 0.4em\relax John Wiley \& Sons, Ltd, 2017, ch.~2, pp. 27--61. [Online]. Available: \url{https://onlinelibrary.wiley.com/doi/abs/10.1002/9781119258308.ch2}
\BIBentrySTDinterwordspacing

\bibitem{dirk2018}
S.~Dirksen, G.~Lecué, and H.~Rauhut, ``On the gap between restricted isometry properties and sparse recovery conditions,'' \emph{IEEE Transactions on Information Theory}, vol.~64, no.~8, pp. 5478--5487, 2018.

\bibitem{DePrima1971}
C.~R. DePrima and W.~Petryshyn, ``Remarks on strict monotonicity and surjectivity properties of duality mappings defined on real normed linear spaces,'' \emph{Mathematische Zeitschrift}, vol. 123, pp. 49--55, 1971.

\bibitem{Bynum_1976}
W.~L. Bynum, ``Weak parallelogram laws for banach spaces,'' \emph{Canadian Mathematical Bulletin}, vol.~19, no.~3, p. 269–275, 1976.

\bibitem{kaba2015}
\BIBentryALTinterwordspacing
M.~Kabanava and H.~Rauhut, ``Analysis l1-recovery with frames and gaussian measurements,'' \emph{Acta Appl. Math.}, vol. 140, no.~1, p. 173–195, dec 2015. [Online]. Available: \url{https://doi.org/10.1007/s10440-014-9984-y}
\BIBentrySTDinterwordspacing

\bibitem{j2014A}
J. and Wolper, ``A mathematical introduction to compressive sensing,'' \emph{Computing reviews}, vol.~55, no.~1, pp. 8--8, 2014.

\bibitem{drsplliting}
P.~L. Combettes and J.-C. Pesquet, ``A douglas–rachford splitting approach to nonsmooth convex variational signal recovery,'' \emph{IEEE Journal of Selected Topics in Signal Processing}, vol.~1, no.~4, pp. 564--574, 2007.

\bibitem{Combettes2004}
\BIBentryALTinterwordspacing
P.~L.~C. *, ``Solving monotone inclusions via compositions of nonexpansive averaged operators,'' \emph{Optimization}, vol.~53, pp. 475 -- 504, 2004. [Online]. Available: \url{https://api.semanticscholar.org/CorpusID:219698493}
\BIBentrySTDinterwordspacing

\bibitem{rauhut2016}
H.~Rauhut and R.~Ward, ``Interpolation via weighted l1 minimization,'' \emph{Applied and Computational Harmonic Analysis}, vol.~40, no.~2, pp. 321--351, 2016.

\bibitem{cahill2016}
J.~Cahill, X.~Chen, and R.~Wang, ``The gap between the null space property and the restricted isometry property,'' \emph{Linear Algebra and its Applications}, vol. 501, pp. 363--375, 2016.

\bibitem{hoeff}
\BIBentryALTinterwordspacing
W.~Hoeffding, \emph{Probability Inequalities for Sums of Bounded Random Variables}.\hskip 1em plus 0.5em minus 0.4em\relax John Wiley \& Sons, Ltd, 2014. [Online]. Available: \url{https://onlinelibrary.wiley.com/doi/abs/10.1002/9781118445112.stat02980}
\BIBentrySTDinterwordspacing

\bibitem{XU1991}
H.-K. Xu, ``Inequalities in banach spaces with applications,'' \emph{Nonlinear Analysis: Theory, Methods and Applications}, vol.~16, no.~12, pp. 1127--1138, 1991.

\end{thebibliography}

\vfill

\end{document}